\newcommand{\AutoAdjust}[3]{\mathchoice{ \left #1 #2  \right #3}{#1 #2 #3}{#1 #2 #3}{#1 #2 #3} }
\newcommand{\InBrackets}[1]{\AutoAdjust{[}{#1}{]}}
\newcommand{\Ex}[2][]{\operatorname{\mathbf E}_{#1}\InBrackets{#2}}
\newcommand{\Prx}[2][]{\operatorname{\mathbf{P}}_{#1}\InBrackets{#2}}
\newcommand{\fall}[1]{\underbar{r}}
\newcommand{\argmax}{\operatorname{argmax}}
\newcommand{\price}{p}
\newcommand{\prices}{{\mathbf \price}}
\newcommand{\dist}{F}
\newcommand{\dens}{f}
\newcommand{\ppp}{\ensuremath{\mathrm{PPP}}}
\newcommand{\bin}{\ensuremath{\mathrm{BIN}}}
\newcommand{\mnp}{\mu}
\newcommand{\mye}{\mathcal{M}}
\newcommand{\rmye}{\mathcal{R}^{\scriptscriptstyle \mye}}
\newcommand{\rppp}[1][0]{\mathcal{R}^{\scriptscriptstyle \ppp}_{#1}}
\newcommand{\rbin}[1][0]{\mathcal{R}^{\scriptscriptstyle \bin}_{#1}}
\newcommand{\aon}{binary value}
\newcommand{\Aon}{Binary value}
\newcommand{\rw}{random walk}
\newcommand{\st}{T(V_0)}
\newcommand{\cv}{\mathcal{C}}
\newcommand{\hab}[2]{h_{#1,#2}}
\newcommand{\habc}[3]{h_{#1,\{#2,#3\}}}
\newtheorem{theorem}{Theorem}[section]
\newtheorem{definition}{Definition}
\newtheorem{lemma}[theorem]{Lemma}
\newtheorem{corollary}[theorem]{Corollary}
\theoremstyle{remark}
\newtheorem{remark}{Remark}[section]
\renewcommand{\qed}{\mbox{\ \ \ }\rule{6pt}{7pt} \bigskip}
\renewcommand{\comment}[1]{}
\begin{document}
\title{How to sell an app: pay-per-play or buy-it-now?
\author{Shuchi Chawla \thanks{University of Wisconsin-Madison. \tt{shuchi@cs.wisc.edu}.}
 \and Nikhil R. Devanur\thanks{Microsoft Research. \tt{nikdev@microsoft.com}.}
 \and Anna R. Karlin \thanks{University of Washington. \tt{karlin@cs.washington.edu}.}
 \and Balasubramanian Sivan\thanks{Microsoft Research. \tt{bsivan@microsoft.com}.}
}
}

\date{}
\maketitle{}
\begin{abstract}
We consider pricing in settings where a consumer discovers his value for a good only as he uses it, and the value evolves with each use. We explore simple and natural pricing strategies for a seller in this setting, under the assumption that the seller knows the distribution from which the consumer's initial value is drawn, as well as the stochastic process that governs the evolution of the value with each use. 

We consider the differences between up-front or ``buy-it-now" pricing (BIN), and ``pay-per-play" (PPP) pricing, where the consumer is charged per use. Our results show that PPP pricing can be a very effective mechanism for price discrimination, and thereby can increase seller revenue. But it can also be advantageous to the buyers, as a way of mitigating risk. Indeed, this mitigation of risk can yield a larger pool of buyers. We also show that the practice of offering free trials is largely beneficial. 

We consider two different stochastic processes for how the buyer's value evolves: In the first, the key random variable is how long the consumer remains interested in the product. In the second process, the consumer's value evolves according to a random walk or Brownian motion with reflection at 1, and absorption at 0. 

\end{abstract}

\thispagestyle{empty}
\newpage
\setcounter{page}{1}

\section{Introduction}
\label{sec:intro}
A standard assumption in mechanism design about a consumer of goods is that he has valuations for various bundles of goods, which he knows ahead of time. The consumer is then modeled as a utility maximizer, where the utility is typically quasi-linear -- equal to the consumer's valuation minus payments made by him. While this is often a reasonable assumption, in this paper we propose an alternate model that is more appropriate for many scenarios. 
In particular, we consider scenarios where the consumer discovers his valuation of a good as he {\em uses}  it. For instance, consider a consumer buying a song on iTunes. The consumer's valuation of the song depends on how much he enjoys it as he listens to it and how many times he wants to listen to it. The consumer does not know these quantities ahead of time, and only discovers them as he repeatedly listens to the song. Indeed the consumer's enjoyment of the song, rather than being constant over time, may evolve as he listens to the song more and more. Another example is an app or a video game that a consumer plays repeatedly. As in the case of a song, the consumer's value may evolve as he uses the product repeatedly; At some point of time, the consumer may tire of the product altogether and stop using it.

We model such scenarios by having a value per usage that evolves with usage.
We use $V_t$ to denote the value to the consumer for the $t+1^{\rm th}$ usage\footnote{In some cases, it may be appropriate to model usage as a continuous quantity, such as the duration of time a game is played, but for the following discussion it is easier to think of it as discrete, and this assumption does not affect our analysis.}. The consumer knows $V_t$ only after she has used the good for $t$ times. 
Note that $t$ denotes usage and not real time: if the consumer does not use the item his valuation for the next usage does not change. 
We model the evolution of $V_t$ as a random process. The consumer has an initial value for the very first usage, $V_0$, which is sampled from a distribution. Subsequently, $V_t$ evolves according to a given random process that depends on $V_0$. 

The prevalent mechanism for selling songs, apps, software, and other digital goods is to offer a one-time contract to the consumer for unlimited usage of the product. We call such a mechanism a ``Buy-It-Now'' (\bin) scheme. Our value evolution model can be ``reduced'' back to the standard valuation model for a one-shot game by considering the expected sum of all the $V_t$s conditioned on $V_0$; This is the total expected value that the consumer obtains through unlimited use of the product \footnote{This assumes that the consumer is risk neutral, which is perhaps {\em not} a reasonable assumption in the relevant scenarios. In case of risk averse consumers, one can likewise consider the risk adjusted expectation of total value.}. However, such a reduction overlooks the fact that there are other interesting pricing mechanisms in the value evolution model. For instance, a simple pricing mechanism is to charge the consumer {\em per usage}. We call such a mechanism a ``Pay-Per-Play'' (\ppp) scheme. While any one-shot mechanism offers a uniform price for unlimited usage to all consumers, a \ppp\ scheme allows different consumer types to pay different amounts based on how long they stay interested in the product. It therefore price-discriminates in a natural and effective way. A \ppp\ scheme also gives the consumer a finer grained control over his own utility, and is therefore preferable to consumers that are risk averse.

Another example of a mechanism in the value evolution model is the common practice of offering a ``free-trial'' to the consumers. Following this free trial, consumers that stay interested may purchase the product in one shot or per-usage. Our model offers a formal explanation of the benefits of such a free trial period.

We would like to highlight the imminently practical nature of the model we consider. 
Almost none of the goods such as music, games, or apps are currently 
sold on a \ppp\ scheme, yet such a scheme would be extremely easy to implement given current technology. 
Insights obtained using our model could therefore be influential in the adoption of a new pricing scheme for such goods. 
Our initial findings suggest that different \ppp\ schemes could do better (and sometimes {\em much} better) than the standard \bin\ pricing scheme. 
Similarly, our results suggest that offering a free trial is often a very good idea. 
Yet, these findings only scratch the surface and  
they open up a whole new spectrum of open problems and research directions (which we discuss towards the end of the paper). 

We consider two different random models of value evolution, both of which are meant to be idealized versions of value evolution in some relevant markets. 
The first model is a binary value model where the consumer's value stays constant while he continues to remain interested in the product, and then drops to zero. In other words, there is a joint distribution over $V_0$ and another random variable $T$, and 
$V_t = V_0 $ for $t\leq T$ and $V_t = 0$ otherwise. 
The binary value model captures scenarios where the consumer's experience from using the product stays the same over time, but after a few uses, the consumer either does not require the product any more or switches to a competitor. Software usage is a good example of this.
The other model we study is a Martingale random walk model where 
$V_t$ is Markovian and the increments are independent with mean 0.\footnote{The model is quite general and includes as a special case the Brownian motion, a standard model of the evolution of stock prices. 
	See Appendix \ref{sec:app2} for details. } The random walk model captures the notion that the value evolves continuously with usage, each usage either increasing or decreasing the value by a small amount. 
We emphasize that the details of the models are not the focus of the paper; they are just meant to be tools for formal analysis by which we may confirm our intuition. Quantitatively, our results depend on the nature of the random process that governs value evolution, but qualitatively we draw insights that apply to both models.

We now summarize our results for the different value evolution models and risk attitudes. 
Our first (and strongest) result is for the binary value model, and the rest are for the random walk model. 
\begin{itemize}
	\item{\bf \ppp\ versus \bin\ in the \aon\ model.} We show that there exists a \ppp\ scheme whose revenue is higher than that of any \bin\ scheme even if the consumer is infinitely risk averse\footnote{See Section~\ref{sec:prelim} for a definition of infinite risk aversion.} in responding to the \ppp\ pricing, and risk neutral in responding to the \bin\ scheme. Furthermore, the \ppp\ scheme generates more consumer utility than the \bin\ scheme. Thus, the \ppp\ scheme is better than \bin\ in all respects: seller revenue, consumer utility and consumer risk. (Theorem \ref{thm:BCM}) 
	
	\item {\bf Price discrimination.} We formalize the intuition that \ppp\ schemes are good price discriminators, by showing that for a risk neutral consumer, there exists a PPP  scheme in which {\em each} consumer type, given by $V_0$, pays at least half of his {\em cumulative value} in expectation, where the cumulative value is his total value if he keeps using the good\footnote{It also turns out that the cumulative value itself is proportional to $V_0$.}. This is {\em almost perfect price discrimination}. (Perfect price discrimination would be each consumer paying his cumulative value.)  Thus, a single \ppp\ scheme obtains revenue at least $\tfrac 1 2$  of  social welfare, for any initial distribution on $V_0$. (Theorem \ref{thm:rnrn})
	
	\item {\bf Free trial.} We show that free trial can also lead to good price discrimination: appropriately choosing a trial period (number of free usages) and then offering a buy-it-now price to a risk neutral consumer also has each consumer pay a constant fraction of his cumulative value in expectation.  (Theorem \ref{thm:ftpBIN})
	
\noindent Furthermore, a combination of {\bf free trial and PPP} enables price discrimination even over infinitely risk averse consumers, and gives the {strongest result in the random walk model}: infinitely risk averse consumers pay a constant fraction of their cumulative value under this scheme. (Theorem \ref{thm:ftp})

	\item {\bf Risk aversion.} Intuitively, a \ppp\ scheme is better for a risk averse consumer, since his payment can depend on the $V_t$s and he therefore has much better control over his realized utility. At the extreme end of the spectrum, where the consumer is infinitely risk averse, we show that a \ppp\ pricing scheme can get an asymptotically\footnote{The asymptotics are with the variance going to zero, which is due to the choice that we bound the value per usage to be within  $[0,1]$. Since the quantities are scale invariant, we ``scale up'' by decreasing the variance.} unbounded factor higher revenue than \bin. 
We  also give an interpolation of this result for a single parameter family of risk averse utilities that range from  the infinitely risk averse  to  risk neutral.  (Theorems \ref{thm:rara} and \ref{thm:alphaRA})

	\item The fact that the risk neutral \ppp\ revenue is half of the social welfare also implies that it is half of the \bin\ revenue, but which is higher? The answer actually depends on the initial distribution. Given any distribution, we show how to compute the two revenues so they can be compared.  
                  While we don't have a concise characterization of distributions for which one of the schemes is better, our computations for a few distributions confirm the following intuition: distributions that are not heavily concentrated towards the upper end of $[0,1]$ favor the \ppp\ pricing scheme. Many of the well known distributions have this property, and for the uniform distribution we obtain that a \ppp\ scheme gets a higher revenue than the optimal \bin\ scheme, 
		while having each consumer type pay (in expectation) less than the optimal buy-it-now price. 
		  In other words, the \ppp\ scheme grows the total pie and hence gets a bigger share of the pie.  (Remarks \ref{rem:Compare} and \ref{rem:U01} )
\end{itemize}
Together these results give us a better understanding of the strengths and weaknesses of the different pricing schemes. They are witnesses to the kind of insights we can derive using our model of value evolution. This is yet simply scratching the surface and we expect in  future many more interesting results in this model. 
\paragraph{Related Work.}
Initiated by the work of~\citet{BB84}, there have been a number of papers~\citep{Besanko85, CL00, Battaglini05, ES07, PST14, bergemann2014dynamic} that consider revenue maximizing contracts where the private information of the agents evolves over time. The main difference from our model is that in this line of work, the agents' private information evolves over time \emph{irrespective} of whether the agent consumes the good or not. In contrast,  in our model the buyer's value evolves only if he consumes the good. The similarity to our model is that the value is often modeled as evolving according to a Markov process.  We note that both models are applicable in different scenarios: pure evolution with time is probably more applicable for a gym membership whereas ours is more applicable for products such as music, sofware, or video games. To our knowledge, the evolution of consumers' values with usage has not been studied previously in mechanism design literature. 

A technically unrelated direction, falling under the broad umbrella of online mechanisms, studies revenue/welfare maximization when buyers arrive sequentially, with either adversarial value distributions, or values drawn from a known/unknown distribution \cite{LN00, BHW02, BKRW03, KL03, BH05, BDKS12}. A special kind of online mechanism is a dynamic pricing mechanism where the seller posts a price for each buyer (as opposed to more general schemes like auctions). There is a huge body of work on dynamic pricing and revenue management problems in the Operations Research literature.  See~\cite{BZ09} and references there in. The book by~\citet{VK12} compares various pricing strategies like posted-prices, auctions, and haggling in the presence/absence of competition.


\section{Model \& Preliminaries}
\label{sec:prelim}
\paragraph{Basic Setup.} We consider a single seller offering goods for repeated private consumption (e.g. digital goods), and a single buyer. We normalize the seller's cost of production to $0$. The buyer's initial private value for the good, $V_0 \in [0,1]$, is drawn from a publicly known distribution $\dist$.  The buyer's value for the good evolves with repeated consumption of the good. He knows the process by which $V_t$ evolves but only learns his actual value $V_t$ for the $(t+1)^{st}$ consumption immediately after consuming the good $t$ times.  The seller, on the other hand, only knows the underlying distributions: the initial value distribution $F$ and the process by which the $V_t$ evolves, but not its precise value.  We assume that the buyer's value remains bounded in the range $[0,1]$ and is non-zero for only a finite number of steps.  The buyer faces uncertainty as to how his value will evolve in the future.  This uncertainty exposes him to some risk, and he takes this into account while making his purchase decisions. We elaborate further on buyer behavior below.
The seller's goal is to maximize his expected revenue, which is the sum total of all the prices paid by the buyer. 

\paragraph{Buyer's value evolution.} We consider two models for how the buyer's value evolves.
\begin{itemize}
\item {\bf The \aon\ model:}
In the \aon\ model, the buyer's value remains at his initial value $V_0$ for some number of steps $T(V_0)$, and then falls to $0$ in the $(T(V_0)+1)$th step. The number of steps $T(V_0)$ is unknown to both the buyer and the seller, but it is drawn from a publicly known distribution. This model captures settings where the buyer makes a binary decision to continue or discontinue the use of the product, but as long as he continues, he has a fixed value for the product or service. For example, the buyer may derive some fixed amount of pleasure from playing a video game or going to the gym but switches to a competitor after some rounds of play. We assume that $E[T(V_0) | V_0]$ is non-decreasing in $V_0$ and is finite.

\item {\bf The \rw\ model:}
In the  \rw\ model after each consumption, the buyer's value increases or decreases by some fixed $\delta>0$, with probability $\frac{1}{2}$ each. When the value reaches $0$, it gets absorbed there and doesn't change anymore. In other words, the buyer loses interest in the product and does not want to continue purchasing it at any price. When the value hits $1$, it gets reflected, i.e., it decreases by $\delta$. Both the buyer and the seller are aware of this process of value evolution. 

The \rw\ model is a special case of a more general {\em Markov evolution} model, in which the additive change to the buyer's value after the $t$th consumption is a random variable, $\Delta_t$, that depends only on $t$.
We assume that $\Delta_t$ has mean $0$ and standard deviation $\delta>0$ (assumed to be small).

These models capture settings where each consumption can slightly enhance or diminish the buyer's overall experience of the product. Magazine subscriptions, subscriptions to episodes of a TV serial, or to a radio channel, are some examples.

\end{itemize}

\paragraph{Selling mechanisms.} Perhaps the most prevalent selling mechanism for digital goods such as songs, apps, and video games, is what we call the {\em buy-it-now} (BIN) mechanism. Here the seller charges a one-time fee for unlimited usage of the product. We compare BIN mechanisms to schemes that charge the buyer for each successive consumption of the product. We collectively call such schemes {\em pay-per-play} (PPP) mechanisms. A pay-per-play scheme is specified by an infinite vector of prices $\prices$, with $p_t$ denoting the price charged for the $t$th consumption of the product. In this paper we focus on
simple PPP schemes such as a constant price PPP scheme. 
Pay-per-play schemes found in practice, such as magazine subscriptions and gym memberships, typically charge a fixed amount per usage. 

We also analyze the effect of offering a {\em free trial}: the seller offers the product for free for some fixed number of steps, and thereafter either offers a buy-it-now price, or a constant price PPP scheme. 

\paragraph{Buyer cumulative value, social welfare, and utility.}
The cumulative value of a buyer  is the total value he can potentially accumulate before he loses interest in the product (his value drops to $0$).  Formally, consider a buyer with initial value $V_0$ whose value evolves according to one of the models above. Let $\st$ be the buyer's stopping time -- the time at which his value drops to 0 or below.  Then the {\em cumulative value} of this buyer is the random variable $\sum_{0\le t <\st} V_t.$
We denote by $\cv(v)$ the expectation of this random variable conditioned on $V_0 = v$. 

The {\em social welfare} generated by a mechanism is the total value that the buyer obtains from using the product. In a  BIN setting, the buyer's {\em social welfare} is his cumulative value if he purchases the good, and 0 otherwise.
In a PPP setting, if we denote by $X_t$ the indicator random variable which is 1 if the buyer purchases at the $t^{th}$ opportunity, then the buyer's social welfare with a starting value of $V_0$ is $\sum_{t\ge 1} V_{t-1} X_t.$
Finally, the {\em utility}  of a buyer with initial value $V_0$ in a BIN setting  with price $p$
is $\sum_{0\le t <\st} V_t - p$ if the buyer purchases, and 0 otherwise. 
In a PPP setting, the {\em future utility} of a buyer that has purchased $t$ times, and whose current value is $V_t$, facing price sequence $\prices$  is $\sum_{s\ge t} (V_{s}-p_{s+1})X_s.$

\paragraph{Buyer's purchase behavior and risk aversion.} In a buy-it-now mechanism, the buyer makes a one-time decision to purchase or not purchase the product, depending on his (random) future utility. 
In a pay-per-play mechanism, the buyer makes a purchasing decision at every step of the process, once again taking into account the effect of his decision on his (random) future utility. Because the value of the buyer as well as the price of the product evolve with consumption, once the buyer decides to discontinue buying and consuming the product, the value and price freeze, and the buyer never buys the product again. 

We assume that the buyer is a utility maximizer, but because of the uncertainty in his future value for the item, is also sensitive to losses. In the main body of the paper, we consider the two extreme ends of the spectrum with respect to risk: risk neutral buyers on the one hand, and completely (or infinitely) risk averse buyers on the other. A {\em risk neutral} buyer chooses to buy or not depending on which of these maximizes his expected utility. An {\em infinitely risk- averse}  buyer not only ensures that his future utility is maximized in expectation, but also ensures that his future utility will be non-negative with probability $1$. We use $\rbin$ (resp. $\rppp$) to denote the revenue from a BIN scheme (resp. PPP scheme) with a risk-neutral buyer.  $\rbin[\infty]$ and $\rppp[\infty]$ denote the corresponding revenues from an infinitely risk-averse buyer.

In Appendix~\ref{sec:app}, we consider a single-parameter family of risk-averse preferences that interpolates between risk neutral and infinitely risk averse preferences. 

\paragraph{Monopoly price and single-round optimal revenue.} We will need the optimal revenue achievable in a single round when a buyer's value is drawn from the publicly known distribution $\dist$. A special case of Myerson's result~\cite{M81} shows that the optimal revenue in this case is $\max_p p(1-\dist(p))$. The price that optimizes this expression is the monopoly price, denoted by $\mnp$, is given by $\phi^{-1}(0)$, where $\phi(x) = x - \frac{1-F(x)}{f(x)}$, and the optimal revenue is denoted by $\rmye$.

\section{\Aon\ model}
\label{sec:aon}
We  begin with an analysis of the \aon\ model of value evolution. Recall that in this model, the buyer has a fixed constant value for usage of the product. However, after some random number of steps, $T(V_0)$, which is unknown to both the buyer and the seller, the buyer loses interest in the product and his value drops to $0$. We call $T(V_0)$ the buyer's stopping time. Since the buyer's value doesn't change as long as he maintains interest in the product, a simple and natural selling mechanism is a constant price \ppp\ scheme. Regardless of the buyer's risk attitude, a buyer will purchase the product in the constant price scheme if and only if $V_0$ exceeds the price, and in that case, for as long as his value is non-zero. Buyers with different stopping times pay different amounts to the mechanism, which gives an effective way to price discriminate. 

We convert this intuition into a proof, showing that a constant price \ppp\ scheme with an appropriate price obtains more revenue (even with a risk averse buyer) than a \bin\ scheme (even with a risk neutral buyer). Moreover, the \ppp\ scheme  generates more social welfare because it serves more buyers. Some of this social welfare can be shared with the buyers, resulting in a higher total utility for the buyers in addition to a higher revenue for the seller. So in this setting, \ppp\ is better in all respects than \bin.

\begin{theorem}\label{thm:BCM} 
In the \aon\ model when $\Ex{T(V_0)|V_0}$ is a non-decreasing function of $V_0$ there is a constant price \ppp\ scheme that with an {\em infinitely risk averse buyer} gets at least as much revenue, as much social welfare, and as much buyer utility as the revenue optimal \bin\ scheme with a {\em risk-netural buyer}.
\end{theorem}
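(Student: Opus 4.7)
The plan is to exhibit a single constant PPP price at which all of revenue, social welfare, and buyer utility weakly dominate the BIN optimum. Write $\tau(v) := \Ex{T(V_0) \mid V_0 = v}$, $g(v) := v\,\tau(v)$ (the expected cumulative value of a buyer with initial value $v$), $p^*$ for the revenue-optimal BIN price, and $v^* := \inf\{v : g(v) \ge p^*\}$. Because $g$ is non-decreasing, a risk-neutral BIN buyer accepts $p^*$ iff $V_0 \ge v^*$, so $\rbin = p^* \Pr[V_0 \ge v^*]$ and the BIN welfare is $W := \Ex{V_0\,T(V_0)\,\mathbf{1}\{V_0 \ge v^*\}}$. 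In a constant PPP scheme at price $q$, an infinitely risk-averse buyer participates iff $V_0 \ge q$ and then pays $q$ at each of the $T(V_0)$ positive-value usages, yielding $r(q) := q\,\Ex{T(V_0)\,\mathbf{1}\{V_0 \ge q\}}$, $w(q) := \Ex{V_0\,T(V_0)\,\mathbf{1}\{V_0 \ge q\}}$, and $u(q) := w(q) - r(q)$.

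First I would evaluate at the natural candidate $q = v^*$. The two participating sets coincide, so $w(v^*) = W$, and monotonicity of $\tau$ on the event $\{V_0 \ge v^*\}$ gives
\[ r(v^*) \;=\; v^*\,\Ex{\tau(V_0)\,\mathbf{1}\{V_0 \ge v^*\}} \;\ge\; v^*\,\tau(v^*)\Pr[V_0 \ge v^*] \;=\; p^*\Pr[V_0 \ge v^*] \;=\; \rbin. \]
So at $q = v^*$ the PPP scheme ties BIN in welfare and weakly beats it in revenue—but this excess revenue is extracted from the buyer, leaving $u(v^*) = W - r(v^*) \le W - \rbin$. To boost utility I would spend the revenue slack by lowering $q$. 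Since $w$ is non-increasing in $q$ and $r$ is continuous with $r(0) = 0 \le \rbin \le r(v^*)$, the intermediate value theorem supplies a $q^{**} \in [0, v^*]$ with $r(q^{**}) = \rbin$. At this price, PPP revenue equals $\rbin$ by construction, welfare satisfies $w(q^{**}) \ge w(v^*) = W$ since $q^{**} \le v^*$, and utility is $w(q^{**}) - r(q^{**}) \ge W - \rbin$, matching the BIN buyer utility. All three inequalities hold simultaneously.

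The principal obstacle is the structural tension that any PPP price strictly beating BIN in revenue on a fixed participating set must sacrifice buyer utility by exactly the revenue excess. The resolution relies on continuity of the PPP revenue curve: shrinking $q$ below $v^*$ strictly enlarges the participating set and hence strictly increases welfare, and the IVT pins down the $q^{**}$ at which the revenue surplus at $q = v^*$ has been converted fully back into social welfare, and thus into buyer utility.
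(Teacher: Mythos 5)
Your proof is correct and follows essentially the same route as the paper's: set the constant PPP price at the risk-neutral BIN participation threshold (your $v^*$, the paper's $v^{\bin}$), use monotonicity of $\Ex{T(V_0)|V_0}$ to get $\rppp[\infty]\ge\rbin$ with identical welfare, then lower the price to hand the surplus back to the buyer. Your only departure is to make the paper's informal ``gradually decrease the price'' step precise via the intermediate value theorem, stopping at the price where PPP revenue exactly equals $\rbin$ (which tacitly uses continuity of $q\mapsto q\,\Ex{T(V_0)\mathbf{1}\{V_0\ge q\}}$, fine for atomless $\dist$) --- a slightly cleaner finish than the original.
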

\begin{proof}
In the \aon\ model with a {\em risk-neutral} buyer, recall that the buyer's expected cumulative value is $\cv(V_0) = V_0 \Ex{T(V_0)|V_0}$. The optimal \bin\ price is the monopoly price for the distribution of this random variable where $V_0$ is drawn from the distribution $\dist$. We assume that $\Ex{T(V_0)|V_0}$ is a non-decreasing function of $V_0$. So  there is a threshold initial value $v^{\bin}$ such that the risk neutral buyer purchases in the \bin\ scheme if and only if his initial value is at least $v^{\bin}$. The revenue of the \bin\ scheme is then $\rbin = \cv( v^{\bin})(1-\dist(v^{\bin}))= v^{\bin} \Ex{T(v^{\bin})} (1-\dist(v^{\bin}))$.

Now  consider a constant price \ppp\ scheme. We  begin by setting the per-play price in the scheme to be equal to the threshold value $v^{\bin}$. At this price, any buyer with $V_0\ge v^{\bin}$ purchases the product until his value becomes $0$. This is the same set of buyers that purchase in the \bin\ scheme described above. Moreover, a buyer with initial value $V_0$ pays a total of $v^{\bin}\Ex{T(V_0)}$, which is at least as large as $v^{\bin} \Ex{T(v^{\bin})} $, the amount the same buyer pays in \bin. Thus, $\rppp[\infty]\ge\rbin$. Further, since this \ppp\ scheme serves the same set of buyers as the \bin\ scheme, the two generate the same total social welfare. Now, consider gradually decreasing the per-play price in the \ppp\ scheme to below $v^{\bin}$. Then, the social welfare generated by \ppp\ increases, while its revenue may or may not decrease. We can continue decreasing the price as long as the revenue of \ppp\ stays above that of  \bin, 
and at some point, both the revenue and the buyer utility (which is social welfare $-$ revenue) of \ppp\ exceed the corresponding quantities for \bin. 
\end{proof}

\section{The \rw\ model}
\label{sec:rw}
We now switch to the \rw\ model of value evolution. Recall that in
this model, the buyer's value evolves as a random walk with step size
$\delta>0$. The value stays bounded within $[0,1]$; When it reaches
$1$, it gets reflected back to stay within the range $[0,1]$. When it
reaches $0$, it gets absorbed, or remains $0$. In Appendix~\ref{sec:app2} we
consider a more general Markov evolution model and show
that many of our results extend to that setting.

\subsection{Notation and basic facts}
\label{sec:rw-prelim}
Let $\hab{u}{v}$ denote the expected time for a random
walk starting at $u$ to hit $v$, and, $\habc{u}{v}{v'}$ denote the
expected time for a random walk starting at $u$ to hit one of $v$ or
$v'$.  We use $\Prx[v]{E}$ to denote the probability of a
(random-walk-related) event $E$ when the random walk starts at
$v$. Similarly $\Ex[v]{X}$ denotes the expected value of a
(random-walk-related) random variable $X$ when the random walk starts
at $v$. 

The following lemma is standard. See, e.g., ~\cite{levin2009markov}. Note that this lemma is about a simple random walks with step size $\delta$ (i.e., there is no reflection at $1$ or absorption at $0$). 
\begin{lemma}
\label{lem:RWbasic}
Let $X_t$ be a simple random walk with increments of $\pm \delta$, starting from $X_0 = v$ with $v\in (0,1)$.  Let $w < v < u$, Then,
\begin{align}
\label{Basic1} 
& \Prx[v]{X_t\text{  reaches u before w}}  = \frac{v-w}{u-w}
\shortintertext{In particular, } 
\notag & \Prx[v]{X_t\text{  reaches 1
    before 0}} =v
\shortintertext{and,}
\label{Basic2} 
& \habc{v}{w}{u} = \frac{(v-w)(u-v)}{\delta^2}
\end{align}
\end{lemma}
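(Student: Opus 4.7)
The plan is to prove all three parts by applying the optional stopping theorem to two standard martingales associated with the simple random walk. The first martingale is $X_t$ itself, which is a martingale because the $\pm\delta$ increments are mean-zero. The second is $M_t := X_t^2 - t\delta^2$, which is a martingale because $(X_{t+1}-X_t)^2 = \delta^2$ deterministically and the cross-term $2X_t(X_{t+1}-X_t)$ has zero conditional mean, giving $\Ex{X_{t+1}^2 - X_t^2 \mid X_t} = \delta^2$. Let $\tau$ be the first hitting time of $\{w,u\}$; starting strictly between $w$ and $u$ with bounded non-degenerate increments guarantees that $\tau$ is almost surely finite and in fact has finite expectation.

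For equation~(\ref{Basic1}), I would apply optional stopping to $X_t$ at $\tau$. The stopped process $X_{t\wedge\tau}$ is bounded in $[w,u]$, so optional stopping applies cleanly and gives $\Ex[v]{X_\tau} = X_0 = v$. Writing $p := \Prx[v]{X_t\text{ reaches } u\text{ before } w}$ and using $X_\tau \in \{w,u\}$ almost surely, this identity becomes $v = up + w(1-p)$, which rearranges to $p = (v-w)/(u-w)$. The ``in particular'' special case $w=0$, $u=1$ then immediately gives $\Prx[v]{X_t\text{ reaches 1 before 0}} = v$.

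For equation~(\ref{Basic2}), I would apply optional stopping to $M_t$ at the same stopping time $\tau$. Since $X_{t\wedge\tau}^2$ is bounded and $\Ex[v]{\tau}<\infty$, this yields $\Ex[v]{X_\tau^2} - \delta^2 \Ex[v]{\tau} = v^2$. Substituting $\Ex[v]{X_\tau^2} = u^2 p + w^2(1-p)$ with $p = (v-w)/(u-w)$ from step one, and simplifying the rational expression $[u^2(v-w) + w^2(u-v) - v^2(u-w)]/(u-w)$, produces $(v-w)(u-v)$, so $\delta^2 \Ex[v]{\tau} = (v-w)(u-v)$, i.e., $\habc{v}{w}{u} = (v-w)(u-v)/\delta^2$.

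The only mild subtlety is justifying optional stopping, which I would handle via the bounded-differences version: $\tau$ has finite expectation by a standard geometric-tail argument on exit from a bounded interval, and for both martingales the one-step differences are uniformly bounded (by $\delta$ and by at most $2\delta\max(|u|,|w|)+\delta^2$, respectively). With those integrability conditions in place, the proof reduces to two routine applications of optional stopping plus one short algebraic simplification, and there is no genuine obstacle beyond this bookkeeping.
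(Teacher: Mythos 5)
Your proof is correct: the paper gives no proof of this lemma itself (it cites it as standard from~\cite{levin2009markov}), and your argument via optional stopping applied to the martingales $X_t$ and $X_t^2 - t\delta^2$ is exactly the standard route, the same one the paper itself uses in Appendix~\ref{sec:app2} when it generalizes these facts. The only point worth flagging is the implicit lattice assumption that $w$ and $u$ lie on the grid $v+\delta\mathbb{Z}$ (so that $X_\tau\in\{w,u\}$ exactly rather than by overshoot), which the paper also takes for granted in its random walk model.
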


The following lemma summarizes several important facts about our random walk process. In particular, it determines the hitting times for this walk taking
reflection at $1$ and absorption at $0$ into account. Some of the
analysis later in the paper is based on the fact that  buyers with high initial values are likely to
quickly reach the maximum value. This is formalized in the third part of the following lemma, which derives the expected time to reach $1$ conditioned on reaching $1$ before $0$.

\begin{lemma}
\label{cl:nice}
Suppose that $V_t$ evolves according to the \rw\ model starting at $V_0=v$. 
%
Then \begin{equation}
\label{reach1} \Prx[v]{V_t\text{  reaches 1 before 0}} = v.
\end{equation}
For any $v>u$
\begin{equation}
\label{Nice3}  h_{vu} =\Ex[v]{\text{ time to hit }u} = \frac{(v-u)(2-v-u)}{\delta^2}  \Longrightarrow \quad
T(v) = \Ex[v]{\text{ time to hit } 0} = \frac{v(2-v)}{\delta^2}.\quad
\end{equation}
Let $\tau$ be the first time $V_t$ hits 0 or 1. Then, 
\begin{equation}
\label{Nice2}  \Ex[v]{\tau | V_{\tau}=1} = \frac{1-v^2}{3\delta^2}.
\end{equation}
\end{lemma}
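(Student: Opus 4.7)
The plan is to handle the three equations in order, in each case using one-step analysis to derive a discrete-Laplace-type recurrence, guessing a low-degree polynomial solution motivated by the continuous ODE limit, and verifying it directly against the boundary conditions. For (\ref{reach1}), observe that the event ``$V_t$ hits $1$ before $0$'' is determined the first time the walk touches $\{0,1\}$; since reflection at $1$ only takes effect \emph{after} the walk has already reached $1$, the probability coincides with the corresponding probability for the unreflected simple random walk analyzed in Lemma~\ref{lem:RWbasic}, which with $w=0$ and $u=1$ equals $v$.

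For (\ref{Nice3}), I would let $f(v):=h_{v0}$ be the expected hitting time of $0$ from $v$ under the reflected walk. One-step analysis in the interior yields the recurrence $f(v+\delta)-2f(v)+f(v-\delta)=-2$, with the absorbing boundary $f(0)=0$ and the reflecting boundary $f(1)=1+f(1-\delta)$ (from $1$ the walk deterministically moves to $1-\delta$). Motivated by the continuous ODE $f''(v)=-2/\delta^2$, I would guess $f(v)=v(2-v)/\delta^2$ and check that it satisfies the recurrence and both boundary conditions; uniqueness of the solution to this discrete Poisson equation then gives $T(v)=f(v)$. For general $0<u<v$, the strong Markov property together with the fact that the walk almost surely reaches $0$ (and therefore must first pass through $u$) gives $h_{v0}=h_{vu}+h_{u0}$, so $h_{vu}=f(v)-f(u)=(v-u)(2-v-u)/\delta^2$ after a line of algebra.

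For (\ref{Nice2}), I would work with the unconditional quantity $H(v):=\Ex[v]{\tau\,\mathbf{1}[V_{\tau}=1]}$; by (\ref{reach1}) it equals $v\cdot\Ex[v]{\tau\mid V_{\tau}=1}$, so dividing by $v$ at the end recovers the claim. Here $\tau$ stops on reaching $\{0,1\}$, so no reflection is involved. One-step analysis, combined with $\Prx[v\pm\delta]{V_{\tau}=1}=v\pm\delta$ (again from part (\ref{reach1})), yields $H(v+\delta)-2H(v)+H(v-\delta)=-2v$ with absorbing data $H(0)=H(1)=0$. Guided by the ODE $H''(v)=-2v/\delta^2$, I would guess $H(v)=v(1-v^2)/(3\delta^2)$ and verify directly that it satisfies both the recurrence and the boundary conditions. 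The only non-routine step in the whole argument is keeping straight which boundary the walk sees in each part (reflecting at $1$ for the hitting time of $0$, but absorbing at both $0$ and $1$ for the conditional hitting time of the killed walk); once the correct boundary conditions are written down, the polynomial solutions are forced by uniqueness and everything else is direct substitution.
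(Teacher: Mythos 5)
Your argument is correct, but it takes a genuinely different route from the paper for the two nontrivial parts. For \eqref{Nice3} the paper does not set up a boundary-value problem at all: it invokes the reflection principle, identifying the walk with reflection at $1$ and absorption at $0$ with an unreflected walk on $[0,2]$ absorbed at $0$ and $2$, so that hitting $u$ becomes hitting $\{u,2-u\}$ and the formula drops out of Equation~\eqref{Basic2} in Lemma~\ref{lem:RWbasic}. For \eqref{Nice2} the paper appeals to the cubic martingale $M_t=Y_t^3-3tY_t$ with $Y_t=V_t/\delta$ and optional stopping (this is Exercise 17.1 of the cited text, and is carried out in generality in Appendix~\ref{sec:app2}). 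You instead derive discrete Poisson equations by one-step analysis, write down the boundary conditions explicitly (including the reflecting condition $f(1)=1+f(1-\delta)$, which is the one place where care is needed, and which you handle correctly and consistently with how the paper treats the boundary elsewhere), verify the polynomial solutions, and use additivity of hitting times via the strong Markov property to pass from $h_{v0}$ to $h_{vu}$; your computation of \eqref{Nice2} through the unconditional quantity $H(v)=\Ex[v]{\tau\,\mathbf{1}[V_\tau=1]}$ and division by $\Prx[v]{V_\tau=1}=v$ is a clean substitute for the martingale argument. What each approach buys: yours is elementary and self-contained (no reflection-principle identification, no optional stopping theorem), at the cost of guess-and-check plus an asserted uniqueness step for the discrete systems, which is standard for expected hitting times of a finite absorbing chain but should at least be named (maximum principle, or uniqueness of the hitting-time linear system given almost-sure absorption); the paper's martingale route is shorter and, more importantly, generalizes directly to the asymmetric Markov evolution model of Appendix~\ref{sec:app2}, where the third-moment correction $c_3$ appears naturally in the martingale and would not be visible from a polynomial ansatz tailored to the symmetric $\pm\delta$ walk.
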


\begin{proof}
Equation~\eqref{reach1} is a standard fact about this kind of random walk.
  Equation~\eqref{Nice3} follows from Lemma~\ref{lem:RWbasic},
  Equation~\eqref{Basic2}, and the reflection principle: the random
  walk with reflection at $1$ and absorption at $0$ is equivalent to a
  random walk between $[0,2]$ with absorption at both $0$ and $2$. Thus, reaching $u$ in the old walk is equivalent to reaching one of $u$ and its reflection $2-u$ in the new walk.

  Equation~\eqref{Nice2} is Exercise 17.1 in ~\cite{levin2009markov}
  (proved by considering a martingale related to $V_t$, namely $M_t =
  Y_t^3 - 3tY_t $, where $Y_t = \frac{V_t}{{\delta}}$). (See also Appendix~\ref{sec:app2}.)\end{proof}

\paragraph{Expected cumulative value.} Next we present expressions for
the buyer's cumulative value. Let $\cv(v,w)$ denote the expected
cumulative value of a buyer with starting value $v$, until the value
reaches $w$ for the first time. 

\begin{lemma}
\label{cl:SWtoW}
Let $v > w$. Then, 
\begin{align*}
\cv(v,w) &= \frac{v-w}{\delta^2}\left( v(1-v) + (1-w)(1-\delta) + \delta^2\right)
\end{align*}
\end{lemma}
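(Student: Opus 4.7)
The plan is to derive the closed form for $\cv(v,w)$ by reducing the problem to a discrete Poisson equation on the grid $\{w, w+\delta, \ldots, 1\}$ and solving it via a particular-plus-homogeneous ansatz. Equivalently, one can argue by optional stopping: find a function $g$ such that $M_t = g(V_t) + \sum_{s<t} V_s$ is a martingale (up to $\tau_w$), and then $\Ex[v]{M_{\tau_w}} = M_0$ yields $\cv(v,w) = g(v) - g(w)$.

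First I would condition on $V_1$ to obtain, for $w < v < 1$,
\[
\cv(v,w) = v + \tfrac12\,\cv(v-\delta, w) + \tfrac12\,\cv(v+\delta, w),
\]
together with the boundary values $\cv(w,w) = 0$ (we stop immediately) and $\cv(1,w) = 1 + \cv(1-\delta, w)$ (the deterministic reflection step). Rewritten, $\cv(\cdot, w)$ satisfies $\Delta^2 \cv(v,w) = -2v$ on the interior, where $\Delta^2 \phi(v) := \phi(v+\delta) + \phi(v-\delta) - 2\phi(v)$. Using the identity $\Delta^2 v^3 = 6v\delta^2$, the function $-v^3/(3\delta^2)$ is a particular solution, so the general form is
\[
\cv(v,w) = -\frac{v^3}{3\delta^2} + A(w) + B(w)\, v .
\]
Plugging $v = w$ determines $A(w)$ in terms of $B(w)$; the reflection equation $\cv(1,w) - \cv(1-\delta,w) = 1$ then determines $B(w)$ after expanding $(1-\delta)^3 = 1 - 3\delta + 3\delta^2 - \delta^3$. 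Finally, I would substitute back, factor $(v-w)$ out of $g(v) - g(w) = -\tfrac{v^3-w^3}{3\delta^2} + B(w)(v-w)$ using $v^3 - w^3 = (v-w)(v^2+vw+w^2)$, and regroup the remaining polynomial into the bracketed factor $v(1-v) + (1-w)(1-\delta) + \delta^2$ displayed in the lemma.

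The main obstacle is handling the one-sided reflection condition at $v = 1$: unlike the interior recurrence, the boundary equation has no $\cv(1+\delta, w)$ term, so $B(w)$ cannot simply be read off from the particular solution and one must carefully track the $\delta$-dependent corrections that come out of the cubic expansion. It is precisely these corrections that produce the $(1-\delta)(1-w)$ and $\delta^2$ terms inside the bracket, rather than the purely cubic expression that appears in the continuous ($\delta \to 0$) limit. Once this boundary bookkeeping is done, the remaining algebra is routine polynomial manipulation.
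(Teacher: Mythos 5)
Your route is genuinely different from the paper's: the paper decomposes at the first hitting time of $\{w,1\}$, writing $\cv(v,w) = v\cdot\habc{v}{w}{1} + \Prx[v]{\text{hit }1\text{ before }w}\cdot\cv(1,w)$ and then pinning down $\cv(1,w)$ from the one-step reflection relation $\cv(1,w)=1+\cv(1-\delta,w)$, whereas you solve the discrete Poisson equation $\Delta^2\cv(\cdot,w)=-2v$ directly via the cubic particular solution (equivalently, the martingale $-V_t^3/(3\delta^2)+\sum_{s<t}V_s$ in the interior) with the same two boundary conditions. Your interior analysis and boundary conditions are correct, but the final step of your plan cannot be completed: carrying your computation through, the reflection condition forces $B(w)=\frac{1}{\delta^2}+\frac{1}{3}$ (independent of $w$), hence
\begin{equation*}
\cv(v,w) \;=\; \frac{v-w}{3\delta^2}\left(3+\delta^2-v^2-vw-w^2\right),
\end{equation*}
and this does not regroup into the stated bracket $v(1-v)+(1-w)(1-\delta)+\delta^2$; the two expressions are not identically equal (e.g.\ at $w=0$, $v=1$ they give roughly $\frac{2}{3\delta^2}$ versus $\frac{1}{\delta^2}$). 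So as a proof of the lemma as printed, the proposal fails at its last line.

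The mismatch, however, is not an algebra slip on your side: the lemma as stated appears to be incorrect for the model the paper itself uses, and your derivation yields the correct value. The paper's first displayed equality tacitly asserts $\Ex[v]{\sum_{t<\tau}V_t}=v\cdot\Ex[v]{\tau}$ for the hitting time $\tau$ of $\{w,1\}$; this Wald-type step is invalid because $\Ex{V_t\mathbf{1}[\tau>t]}\neq v\,\Prx{\tau>t}$ in general (the correct value is $(\Ex[v]{V_\tau^3}-v^3)/(3\delta^2)$, by exactly the cubic martingale you use). A concrete check: take $\delta=\frac{1}{3}$, $w=0$; first-step analysis with the paper's own reflection rule $\cv(1,0)=1+\cv(\frac{2}{3},0)$ gives $\cv(\frac{1}{3},0)=3$, $\cv(\frac{2}{3},0)=\frac{16}{3}$, $\cv(1,0)=\frac{19}{3}$, matching your formula, while the expression in Lemma~\ref{cl:SWtoW} evaluates to $3$, $6$, $7$. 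So treat your argument as proving a corrected version of Lemma~\ref{cl:SWtoW} (and correspondingly of Corollary~\ref{cl:SW}); the downstream results only need bounds of the form $\cv(v)=\Theta(v/\delta^2)$ and survive with the corrected constants, but no amount of regrouping will land on the bracket displayed in the lemma, because it is not the right answer.
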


\begin{proof}
Using Lemmas~\ref{lem:RWbasic} and~\ref{cl:nice}, we have that the expected cumulative value
of a buyer with starting value $v$, until the value reaches $w$ for the first time is:
\begin{align*}
\cv(v,w) &= v\cdot\Ex[v]{\text{Time to hit $w$ or $1$}}+\Prx[v]{\text{Hitting $1$ before $w$}}\cdot \cv(1,w)\\
&= v\cdot\frac{(v-w)(1-v)}{\delta^2} + \frac{v-w}{1-w}\cv(1,w)
\end{align*}
Similarly:
\begin{align*}
\cv(1,w) &= 1+ \cv(1-\delta,w)\\
&= 1 + \frac{(1-\delta)(1-\delta-w)\delta}{\delta^2}+\frac{1-\delta-w}{1-w}\cv(1,w)\\
\Rightarrow \cv(1,w) &= (1-w)\left[\frac{1}{\delta}+\frac{(1-\delta)(1-\delta-w)}{\delta^2}\right]
\end{align*}
Plugging the expression for $\cv(1,w)$ into that for $\cv(v,w)$, we obtain the result. 
\end{proof}

\begin{corollary}
\label{cl:SW}
The expected cumulative value of a buyer with $V_0=v$ is:
$$\cv(v) = \cv(v,0) =\frac{v^2(1-v)}{\delta^2}+
\frac{v}{\delta^2}\left(1-\delta+\delta^2\right) \in
\frac{v}{\delta^2}[1,5/4].$$
\end{corollary}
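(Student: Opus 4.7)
The plan is to specialize Lemma \ref{cl:SWtoW} directly to the case $w=0$ and then derive the interval estimate by bounding a simple quadratic. Setting $w=0$ in the expression $\cv(v,w) = \frac{v-w}{\delta^2}\bigl(v(1-v)+(1-w)(1-\delta)+\delta^2\bigr)$ immediately gives
$$\cv(v)=\cv(v,0) = \frac{v}{\delta^2}\bigl(v(1-v)+1-\delta+\delta^2\bigr),$$
and distributing $v/\delta^2$ across the bracket reproduces the two-term form in the statement: a ``dominant'' cubic piece $v^2(1-v)/\delta^2$ coming from the $v(1-v)$ term, plus a linear piece $v(1-\delta+\delta^2)/\delta^2$. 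This already establishes the equality; the remaining work is purely to control the bracket.

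For the interval estimate, I would write $g(v) := v(1-v) + (1-\delta+\delta^2)$ and bound $g$ on $[0,1]$. The quadratic $v(1-v)$ attains its maximum value $1/4$ at $v=1/2$, so (assuming the step size $\delta\in(0,1]$, which is the setting of the paper) one has $g(v)\le 1/4 + 1-\delta+\delta^2 = 5/4 - \delta(1-\delta) \le 5/4$, giving the upper bound $\cv(v)\le \frac{5v}{4\delta^2}$. For the lower bound $g(v)\ge 1$, I need $v(1-v)\ge \delta(1-\delta)$, which holds precisely when $v\in[\delta,1-\delta]$. Since starting values are grid points strictly between $0$ (absorbing) and $1$ (reflecting), this covers all interior starting values.

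The one subtlety — and the only part that requires a comment — is the behavior at the reflecting endpoint $v=1$, where $g(1)=1-\delta+\delta^2<1$ and the claimed lower bound of $1$ is missed by an $O(\delta)$ additive term. In the regime of interest where $\delta$ is small (the whole point of the random-walk approximation), this is negligible, and the interval $\frac{v}{\delta^2}[1,5/4]$ should be read to leading order in $\delta$; a fully rigorous version would say $\cv(v)\in\frac{v}{\delta^2}[1-\delta+\delta^2,\,5/4]$. So the main (and only) obstacle is merely cosmetic: the corollary is essentially a repackaging of the previous lemma, and its content is the clean asymptotic identification $\cv(v) = \Theta(v/\delta^2)$ that will be invoked in the revenue computations to follow.
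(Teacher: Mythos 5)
Your proof is correct and is exactly what the paper intends: the corollary follows by setting $w=0$ in Lemma~\ref{cl:SWtoW}, and the interval $\frac{v}{\delta^2}[1,5/4]$ comes from bounding $v(1-v)+1-\delta+\delta^2$ on $[0,1]$ just as you do. Your caveat about the lower bound failing by an $O(\delta)$ term at boundary starting values (e.g.\ $v=1$) is a fair observation of an imprecision the paper glosses over in the small-$\delta$ regime, not a gap in your argument.
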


\paragraph{Optimal risk-neutral \bin\ revenue.} Suppose that $V_0=v$. A
risk-neutral \bin\ buyer will compute his expected cumulative value with
this initial value, namely, $\cv(v)$, and accept a price of $p$ if
and only if $\cv(v) \geq p$. Thus the risk neutral \bin\ revenue is
just the Myerson optimal revenue for the distribution of $\cv(V_0)$, and the optimal price is the monopoly price for the distribution of $\cv(V_0)$. Thus we have:

\begin{lemma}
\label{cl:OptRNBIN}
The optimal risk-neutral \bin\ revenue satisfies $\frac{\rmye}{\delta^2}\le \rbin \leq \frac{5\rmye}{4\delta^2}$ where $\rmye$ is the Myerson optimal revenue in one round for the distribution $\dist$. 
\end{lemma}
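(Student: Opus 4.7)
The plan is to reduce the risk-neutral BIN revenue to a Myerson-style single-round revenue on the induced distribution of $\cv(V_0)$, and then sandwich that quantity using the tight two-sided bound from Corollary \ref{cl:SW}. First I would observe that a risk-neutral buyer accepts a one-time price $p$ if and only if $\cv(V_0)\ge p$, so the BIN problem is precisely the Myerson single-round problem for the random variable $\cv(V_0)$, where $V_0\sim\dist$. Since $\cv$ is non-decreasing in $V_0$ on the discrete grid $\{0,\delta,2\delta,\ldots,1\}$ (a quick sanity check using the recurrence $\cv(v)=v+\tfrac12\cv(v+\delta)+\tfrac12\cv(v-\delta)$ together with $\cv(1)=1+\cv(1-\delta)$ shows the first difference $D(v)=\cv(v+\delta)-\cv(v)$ stays positive, being minimized at $v=1-\delta$ where it equals $1$), the optimal BIN mechanism is a threshold mechanism: there is a $v^{\bin}\in[0,1]$ such that the buyer purchases iff $V_0\ge v^{\bin}$, and the optimal price charged is $\cv(v^{\bin})$. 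Hence
\[
\rbin \;=\; \max_{v^{*}\in[0,1]} \cv(v^{*})\bigl(1-\dist(v^{*})\bigr).
\]

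Next I would plug in the sandwich from Corollary \ref{cl:SW}, namely $\tfrac{v}{\delta^2}\le \cv(v)\le \tfrac{5v}{4\delta^2}$, applied pointwise inside this maximization. For the upper bound,
\[
\rbin \;\le\; \max_{v^{*}} \frac{5v^{*}}{4\delta^2}\bigl(1-\dist(v^{*})\bigr) \;=\; \frac{5}{4\delta^2}\,\max_{v^{*}} v^{*}\bigl(1-\dist(v^{*})\bigr) \;=\; \frac{5\rmye}{4\delta^2},
\]
using that the optimal single-round revenue for $\dist$ is exactly $\max_p p(1-\dist(p))=\rmye$. For the lower bound, evaluating at the Myerson-optimal threshold $p^{*}=\argmax_p p(1-\dist(p))$ (rather than optimizing over all thresholds) gives
\[
\rbin \;\ge\; \cv(p^{*})\bigl(1-\dist(p^{*})\bigr) \;\ge\; \frac{p^{*}}{\delta^2}\bigl(1-\dist(p^{*})\bigr) \;=\; \frac{\rmye}{\delta^2}.
\]

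There is no real obstacle in this argument; the only subtlety is that the reduction to a threshold policy requires $\cv$ to be monotone in $V_0$, which is not immediate from the closed-form expression (it is only a quadratic in $v$ with a negative leading coefficient) but is forced by the random-walk recurrence as sketched above. Once monotonicity is in hand, the lemma follows in a single line from Corollary \ref{cl:SW} and the definition of $\rmye$.
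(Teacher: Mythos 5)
Your proposal is correct and takes essentially the same approach as the paper: reduce the risk-neutral \bin\ problem to the single-round Myerson problem for the induced random variable $\cv(V_0)$ and then apply the two-sided bound $\cv(v)\in\frac{v}{\delta^2}[1,5/4]$ from Corollary~\ref{cl:SW}, obtaining the lower bound by pricing at the monopoly threshold of $\dist$. The paper states the lemma as immediate after this reduction; your extra verification that $\cv$ is monotone on the grid (justifying the threshold form $\max_{v^*}\cv(v^*)(1-\dist(v^*))$) simply makes explicit what the paper implicitly assumes.
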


\subsection{\ppp\ enables near-perfect price discrimination over risk-neutral buyers}   
\label{sec:rnrn}

In this section we will show a surprising fact about \ppp: if the buyer is risk-neutral, \ppp\ can enable near-perfect price discrimination and obtain a constant fraction of the buyer's expected cumulative value. The \ppp\ scheme achieving this bound is extremely simple: we charge a constant price of $1/2$ to the buyer for each usage of the product.

\begin{theorem}
\label{thm:rnrn}
Consider a constant price \ppp\ scheme that charges a per-usage price of $\frac{1}{2}$. The risk-neutral revenue of this scheme is at least half of the expected cumulative value of the buyer: $$\forall v\in[0,1], \,\,\, \rppp |_{V_0=v} \geq \frac{1}{2}\cv(v).$$
\end{theorem}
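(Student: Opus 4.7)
The plan is to establish two facts: (i) the risk-neutral buyer facing a per-round price of $1/2$ purchases at every positive state, so his number of purchases equals the walk's absorption time $\tau_0 := \inf\{t : V_t = 0\}$; and (ii) the expected absorption time $T(v) = \Ex[v]{\tau_0}$ itself dominates $\cv(v)$, so the revenue $\tfrac{1}{2}T(v)$ is at least $\tfrac{1}{2}\cv(v)$.

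For (i), I set up the buyer's decision as an optimal stopping problem: since $V_t$ freezes as soon as the buyer stops purchasing, he chooses a stopping time $\tau$ to maximize $\Ex[v]{\sum_{t=0}^{\tau-1}(V_t - 1/2)}$. Let $U(v)$ denote the optimal expected utility from state $v$. The always-purchase policy $\tau = \tau_0$ is feasible, so $U(v) \geq G(v) := \cv(v) - \tfrac{1}{2}T(v)$. Plugging in the closed forms from Corollary~\ref{cl:SW} and equation~\eqref{Nice3} and simplifying,
\[
G(v) = \frac{v\bigl[3v - 2v^2 - 2\delta(1-\delta)\bigr]}{2\delta^2}.
\]
The bracketed expression is concave in $v$, so on $[\delta,1]$ its minimum is attained at an endpoint; at $v = \delta$ it equals exactly $\delta$ (giving $G(\delta) = 1/2$), and at $v = 1$ it equals $1 - 2\delta(1-\delta) \geq 1/2$. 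Hence $G(v) > 0$ at every reachable positive state, so $U(v) > 0$ there, and by the Bellman equation $U(v) = \max\{0,(v-\tfrac12) + \Ex{U(V_{t+1})\mid V_t = v}\}$ the buyer strictly prefers continuing over stopping at every positive state. The optimal $\tau$ is therefore $\tau_0$, and the expected revenue equals $\tfrac{1}{2}\Ex[v]{\tau_0} = \tfrac{1}{2}T(v)$.

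For (ii), direct substitution using the same formulas yields
\[
T(v) - \cv(v) = \frac{v\bigl[(1-v)^2 + \delta(1-\delta)\bigr]}{\delta^2} \geq 0
\]
for all $v,\delta\in[0,1]$, which gives $\tfrac{1}{2}T(v) \geq \tfrac{1}{2}\cv(v)$ and completes the proof.

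The main obstacle is step (i): one must certify that the buyer never finds it profitable to stop at an intermediate state. The coarse bound $\cv(v) \geq v/\delta^2$ from Corollary~\ref{cl:SW} is just barely enough to conclude $G(v) \geq 0$ but not strict positivity at the smallest reachable values; only the precise algebra above---which pins down $G(\delta) = 1/2$---rules out early stopping and lets us conclude that the buyer plays all the way to absorption.
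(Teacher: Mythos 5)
Your proposal is correct and takes essentially the same route as the paper: both arguments show that at price $\tfrac{1}{2}$ the risk-neutral buyer keeps purchasing until absorption, using the closed forms for $\cv$ and the hitting time, and then conclude that the revenue is $\tfrac{1}{2}T(v) \ge \tfrac{1}{2}\cv(v)$. The only cosmetic differences are that the paper certifies ``no early stopping'' by checking nonnegativity of the utility $U(v,w,\tfrac12)$ of every threshold strategy (your $G(v)$ is exactly $U(v,0,\tfrac12)$, combined with a one-step Bellman observation), and it gets $\cv(v)\le T(v)$ trivially from $V_t\le 1$ rather than by algebra.
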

\begin{proof}
How does a risk-neutral PPP buyer react to a constant price \ppp\ at a price of $p$? At any point of time, this buyer computes his expected future utility conditioned on buying at this step, and buys if this expectation is non-negative. Because the \ppp\ price stays constant, the buyer's strategy is time-independent and has a threshold behavior: the buyer continues buying until his value drops below a threshold $w$. We will now compute this $w$ as a function of the \ppp\ price $p$. Let $U(v,w,p)$ denote the buyer's expected future utility in a constant price \ppp\ scheme with per-usage price $p$ when the buyer's current value is $v$ and he continues buying until his value drops below $w$. The buyer's threshold is then the smallest $w$ such that for all $v\ge w$, $U(v,w,p)$ is non-negative.

Recall that $\cv(v,w)$ denotes the expected cumulative value of a buyer with starting value $v$, until the value reaches $w$ for the first time. We can use Lemma~\ref{cl:SWtoW} to compute $U(v,w,p)$.
\begin{align}
U(v,w,p) &= \cv(v,w) - p\cdot \Ex[v]{\text{Time to hit $w$}}
= \cv(v,w) - p\cdot\frac{(v-w)(2-w-v)}{\delta^2}\nonumber\\
&= \frac{v(v-w)(1-v)}{\delta^2} + \frac{(v-w)(1-w)(1-\delta)}{\delta^2}+(v-w)- p\cdot\frac{(v-w)(2-w-v)}{\delta^2}\nonumber\\
&= \frac{v-w}{\delta^2}\left(1-\delta-w(1-\delta-p)+v(1-v)-p(2-v)+\delta^2\right)
\label{eq:Utilvwp}
\end{align}
The RHS of equation~\eqref{eq:Utilvwp} is a decreasing function of $w$. Further, as long as $v \geq w$ and $p \leq \frac{1}{2}$, the RHS is non-negative. Therefore, at a constant price of $\frac{1}{2}$, a risk-neutral buyer keeps buying until his value hits $0$. Surprisingly, even at a value of $\delta$, the buyer is willing to pay a price of $\frac{1}{2}$ because he expects to eventually make up for the lost utility. 
This means that $ \rppp |_{V_0=v} = \frac{1}{2} T(v)$. 
On the other hand, $\cv(v) \leq  T(v)$, since the per usage values are always at most 1.
The lemma follows.  
\end{proof}
\begin{remark}
\label{rem:Surprising}
It appears counter-intuitive at first glance that with a price of $\frac{1}{2}$ the buyer keeps buying until his value hits $0$. The intuition is that although the buyer may be losing money at some particular moment in time, either his value will eventually get high, and he will have large gains for a relatively long time, or his value will go to 0 relatively quickly. The resulting expected value is positive. 
\end{remark}

\begin{remark}
\label{rem:LargePrice}
Theorem~\ref{thm:rnrn} shows that posting a price of $\frac{1}{2}$ ensures that the buyer continues buying until his value hits $0$. What if the price is larger than $\frac{1}{2}$? Then the buyer's threshold to stop buying turns out to be $w=2p-1$. To see this, 
consider a buyer at value $v=w+\delta$; If his value goes one step lower 
he will stop buying. Such a buyer's utility is $U(w+\delta,w,p) = \frac{1}{\delta}\left[(1-w-\delta)(1+w-2p)+\delta(1-p)\right]$. It follows that $U(w+\delta,w,p) \geq 0$ if $w \geq 2p-1$. 
\end{remark}

\begin{remark}
	\label{rem:Compare}
	Note that Theorem \ref{thm:rnrn} gives an exact expression for the payment of a consumer with initial value $v$ 
	in the $\tfrac 1 2 $ price \ppp~scheme,  namely 
	$\frac{1}{2}T(v) = \frac{v(2-v)}{2\delta^2}$. 
	Similarly from Corollary \ref{cl:SW}, we have an exact expression for $\cv(v)$. 
	Thus, for any given distribution, we can compute the optimal \bin~revenue and the constant price \ppp~revenue with 
	a price of $\tfrac 1 2$, and compare the two to see which is higher. 
\end{remark}

\begin{remark}
	\label{rem:U01} 
Theorem~\ref{thm:rnrn} guarantees that \ppp\ obtains at least half of the revenue of \bin\ for risk neutral buyers, however, this comparison is often quite loose. In fact, for many distributions over the initial value, 
$\rppp \geq \rbin$. For instance, when $V_0$ is drawn from the $U[0,1]$ distribution, the expected revenue of $\ppp$ is $\frac{1}{2\delta^2}\int_0^1 v(2-v)\mathrm{d}v = \frac{1}{3\delta^2}$. On the other hand, $\rbin$ can be computed as follows. We have $\cv(v,0) \approxeq \frac{v^2(1-v)}{\delta^2} + \frac{v}{\delta^2}$. The BIN optimal revenue is obtained by computing $\max_t \cv(t,0)(1-\dist(t))$. Taking $F(t) = t$, the maximum value for this expression can be computed to be $\frac{5}{16\delta^2} < \frac{1}{3\delta^2}$. 
Moreover, the optimal \bin~price is $ > \tfrac 1 {2\delta^2}$ (corresponding to a threshold initial value of $\tfrac 1 2$), while the expected payment of a consumer with value $v$ in  \ppp~is always $\leq \tfrac 1 {2\delta^2} $. 
\end{remark}
\begin{remark}
On the other hand, there are distributions for which the revenue of any {\em constant price} \ppp\ scheme is strictly smaller than $\rbin$. For instance when $F(x) = x^2$, a constant price \ppp\ with a price of $1/2$ obtains revenue $\frac{5}{12\delta^2}$; Using the discussion in remark~\ref{rem:LargePrice}, it is straightforward to show that prices larger or smaller than $\frac{1}{2}$ perform no better. On the other hand, we can compute the revenue of the optimal \bin\ scheme as $\rbin = \max_t \cv(t,0)(1-F(t)) = \frac{0.479}{\delta^2}$. 
\end{remark}

\subsection{Free trial enables near-perfect price
  discrimination over all risk profiles}

\label{sec:ftp}

\newcommand{\ftp}{\ensuremath{\mathrm{free}}}
\newcommand{\rftp}[1][0]{\mathcal{R}^{\scriptscriptstyle \ftp+\ppp}_{#1}}
\newcommand{\ftb}{\ftp+\bin}
\newcommand{\rftb}[1][0]{\mathcal{R}^{\scriptscriptstyle \ftb}_{#1}}

We now show that near-perfect price discrimination can be achieved
even over risk averse buyers, if the pricing scheme allows the buyer
to try the product for free for a carefully chosen number of
trials. We first discuss a \ppp\ scheme with a free trial period that achieves this. Next
we show that even \bin\ with a free trial period enables near-perfect
price discrimination over risk neutral agents.

In the following lemmas, let  $\rftp[](T, c)$ be the expected revenue
of the seller obtained by offering the buyer a free trial period
of length $T$, and then offering him a pay-per-play price of
$c$. Likewise, let  $\rftb[](T, c)$ be the expected revenue
of the seller obtained by offering the buyer a {\em free trial period}
of length $T$, and then offering him a buy-it-now price of
$c$.

\begin{theorem}
\label{thm:ftp}
Let $T=\frac{2}{3\delta^2}$ and $c=0.089$. Then the expected seller
revenue from a \ppp\ pricing scheme with a free trial priod of length
$T$ and \ppp\ price $c$ for a risk averse buyer is a constant fraction
of the buyer's expected cumulative value:
$$\forall v\in [0,1], \,\,\, \rftp[\infty](T,c) |_{V_0=v} = \Theta(\cv(v)).$$
\end{theorem}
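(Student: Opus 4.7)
The plan is to show $\rftp[\infty](T, c)|_{V_0=v} = \Theta(\cv(v))$ by matching upper and lower bounds of order $v/\delta^2$. The upper bound is immediate: the infinitely risk-averse buyer (as characterized below) only pays $c$ per round when $V_t \geq c$, so total payments are bounded above by the total accumulated value, which is at most $\cv(v)$. The rest focuses on the lower bound.

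First, I characterize the buyer's post-trial behavior under a constant PPP price $c$. Consider a threshold strategy ``buy until value drops to $w$'' applied from current value $v$. The worst-case trajectory is ``go straight down from $v$ to $w$''---any up--then--down detour inserts two extra steps whose combined net utility is $2(u-c) + \delta > 0$ whenever the intermediate value $u \geq c$, so the adversary never inserts detours. Summing the values along the straight-down path yields worst-case utility $\frac{(v-w)(v + w + \delta - 2c)}{2\delta}$, which is non-negative iff $v + w + \delta \geq 2c$. The smallest admissible threshold is $w = c - \delta$, and, applying this reasoning step by step, the infinitely risk-averse buyer's optimal strategy is to buy as long as $V_t \geq c$ and stop otherwise.

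Next, I lower-bound $\Prx[v]{V_T \geq c_1}$ for some constant $c_1 > c$. Let $A$ be the event that the walk hits $1$ before $0$, with hitting time $\tau \leq T$. By Lemma~\ref{cl:nice}, $\Prx[v]{V_\tau = 1} = v$ and $\Ex[v]{\tau \mid V_\tau = 1} \leq \frac{1}{3\delta^2}$; the choice $T = \frac{2}{3\delta^2}$ makes Markov yield $\Prx[v]{\tau > T \mid V_\tau = 1} \leq \frac{1}{2}$, giving $\Prx[v]{A} \geq v/2$. Conditional on $A$, I invoke the reflection-principle equivalence from the proof of Lemma~\ref{cl:nice}: the walk on $[0,1]$ (reflection at $1$, absorption at $0$) corresponds to a walk on $[0,2]$ absorbing at both endpoints, via $V_t^{\mathrm{orig}} = \min(V_t^{[0,2]}, 2 - V_t^{[0,2]})$. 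The equivalent walk starting at $V^{[0,2]} = 1$ has expected absorption time $1/\delta^2 > T$ and hitting-time variance $O(1/\delta^4)$, so a Paley--Zygmund / reverse second-moment argument gives probability $\Omega(1)$ that the walk is not absorbed in the remaining $T - \tau \leq T$ free-trial steps. On the non-absorption event the martingale property forces $\Ex[V_T^{[0,2]}] = 1$, and a further second-moment / spectral argument on the substochastic kernel keeps $V_T^{[0,2]}$ away from both boundaries with probability $\Omega(1)$, i.e.\ $V_T^{\mathrm{orig}} \geq c_1$. Unconditionally, $\Prx[v]{V_T \geq c_1} \geq \Omega(v)$.

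Finally, given $V_T = u \geq c_1$, by the first step the buyer pays $c$ per round until the walk reaches $c - \delta$. Lemma~\ref{cl:nice} gives this expected hitting time as $\frac{(u - c + \delta)(2 - u - c + \delta)}{\delta^2} = \Omega(1/\delta^2)$, so expected post-trial revenue conditional on $V_T \geq c_1$ is $\Omega(1/\delta^2)$. Combining, $\rftp[\infty](T, c)|_{V_0 = v} \geq \Omega(v) \cdot \Omega(1/\delta^2) = \Omega(v/\delta^2) = \Omega(\cv(v))$, matching the upper bound. The main obstacle will be the second half of the $V_T \geq c_1$ argument: quantifying where $V_T^{[0,2]}$ sits in $(0,2)$ conditional on non-absorption, since the martingale property alone cannot rule out mass piling up near the absorbing boundaries. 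A clean resolution likely either couples the walk with a reflected (non-absorbing) walk or uses an eigenfunction expansion of the substochastic transition kernel, with the resulting quantitative constants ultimately pinning down the specific numerical choice $c = 0.089$.
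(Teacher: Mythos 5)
Your high-level skeleton matches the paper's: condition on the walk reaching $1$ within the trial (probability at least $v/2$, via Equation~\eqref{Nice2} and Markov, exactly as in the paper), then argue that the buyer, who post-trial buys iff $V_t\ge c$, pays $c$ for $\Omega(1/\delta^2)$ rounds; the upper bound via revenue $\le$ cumulative value is also fine. The problem is the execution of the lower bound: you route it through the pointwise claim $\Prx[v]{V_T \ge c_1} = \Omega(v)$ for some constant $c_1>c$, i.e.\ you need distributional control of where the walk sits at the exact moment the trial ends, conditional on non-absorption. You do not prove this; you yourself flag it as ``the main obstacle,'' and the sketch offered (Paley--Zygmund for non-absorption plus ``a further second-moment / spectral argument'' to keep $V_T$ away from the boundaries) is not carried out. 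One intermediate assertion is also wrong as stated: optional stopping gives the \emph{unconditional} expectation of the $[0,2]$-walk at time $T$, not its expectation conditional on non-absorption (for the symmetric walk you could appeal to symmetry, but not to ``the martingale property''). Finally, the constant $c=0.089$ has nothing to do with such spectral constants.

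The paper's proof shows this anti-concentration question is unnecessary. Conditional on $V_\tau=1$ and $\tau\le T$, the strong Markov property gives that the time $H_c$ from $\tau$ until the value first drops below $c$ has expectation $h_{1c}=(1-c)^2/\delta^2$ by Equation~\eqref{Nice3}, and pathwise the post-trial paying time satisfies $T' \ge H_c-(T-\tau)\ge H_c-T$, so $\Ex{cT'\mid V_\tau=1,\ \tau\le T} \ge c\bigl((1-c)^2-\tfrac{2}{3}\bigr)/\delta^2$, which is a positive constant times $1/\delta^2$ for small $c$ and is maximized at $c\approx 0.089$. Multiplying by the probability $\ge v/2$ of the conditioning event and invoking Corollary~\ref{cl:SW} finishes the argument. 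In short, working in expectation with hitting times and simply subtracting $T$ replaces the control of the law of $V_T$ that your write-up is missing; to salvage your route you would genuinely need the coupling or eigenfunction estimate you allude to, but the statement can be proved without it.
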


\begin{proof}
By Equation~\eqref{Nice2} in Lemma~\ref{cl:nice}, and Markov's inequality with $T=
\frac{2}{3\delta^2}$,
\begin{equation}
\label{eqn:TimeHit1}
\Prx[v]{\tau > T \Big| V_{\tau }= 1} \le \frac{1}{2}.
\end{equation}
Let $T'$ be the length of time after the free trial period ends for
which $V_t$ is at least $c$: $T'=\min\{t-T: t>T, \text{and, } V_{t+1}<c\}$. Then,
$\rftp[\infty] = \Ex{cT'}$. Let $\tau$ denote the time at which the
buyer's value reaches $0$ or $1$. Then for a buyer with starting value
$V_0=v$ we obtain
\begin{align}
\notag \rftp[\infty] (T,c) &\ge \Ex[v]{cT' ~| ~V_{\tau}=1\text{ and } \tau \le T}\Prx[v]{\tau \le T ~ | ~V_{\tau}=1}\Prx[v]{V_{\tau}=1}\\
\tag{by Inequality (\ref{eqn:TimeHit1}) and
 Equation~\eqref{Basic1} in Lemma \ref{lem:RWbasic}}
&\ge c \Ex{h_{1c} - T } \frac{1}{2} v\\
\tag{by Equation~\eqref{Nice3} in Lemma~\ref{cl:nice}}
& \ge c \cdot\frac{(1-c)^2 - 2/3 } {\delta^2} \cdot \frac{v}{2}
\end{align}
This latter quantity for sufficiently small $c$ is $\Omega
\left(\frac{v}{\delta^2}\right)$. It is maximized at $c\approx
0.089$. Corollary~\ref{cl:SW} then implies the lemma.
\end{proof}

\begin{theorem}
\label{thm:ftpBIN}
Let $T=\frac{3}{8\delta^2}$ and $c=\frac{1}{4\delta^2}$. Then the
expected seller revenue from offering a free trial priod of length $T$
and then a \bin\ price of $c$ to a risk neutral buyer is a constant
fraction of the buyer's expected cumulative value:
$$\forall v\in [0,1], \,\,\, \rftb(T,c) |_{V_0=v} = \Theta(\cv(v)).$$
\end{theorem}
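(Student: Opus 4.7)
My plan is to reduce the theorem to showing that $\Prx[v]{V_T \geq w^*} = \Theta(v)$ for an appropriate threshold $w^*$, and then to bound this probability in both directions using the random walk estimates already established in Lemmas~\ref{lem:RWbasic} and~\ref{cl:nice}.

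First, I would identify the buyer's acceptance threshold. A risk-neutral buyer observing $V_T$ at the end of the free trial accepts the BIN price $c=\tfrac{1}{4\delta^2}$ exactly when $\cv(V_T) \geq c$. By Corollary~\ref{cl:SW}, $\cv$ is increasing and $\cv(v) \in \tfrac{v}{\delta^2}[1,\tfrac{5}{4}]$, so the acceptance threshold $w^*$ solving $\cv(w^*) = c$ is a constant in $[\tfrac{1}{5}, \tfrac{1}{4}]$. Hence $\rftb(T,c)|_{V_0=v} = c\cdot\Prx[v]{V_T \geq w^*}$, and since $\cv(v) = \Theta(v/\delta^2)$, the theorem reduces to $\Prx[v]{V_T \geq w^*} = \Theta(v)$. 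For the upper bound, any sample path with $V_T \geq w^*$ must reach level $w^*$ before being absorbed at $0$; Equation~\eqref{Basic1} gives $\Prx[v]{\text{walk reaches } w^* \text{ before } 0} = \min(1, v/w^*) \leq 5v$, yielding $\rftb(T,c) = O(\cv(v))$.

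For the lower bound, I would consider the joint event that the walk (i) first hits $1$ at some time $\tau_1 \leq T$ and (ii) subsequently stays above $w^*$ through time $T$; this event is contained in $\{V_T \geq w^*\}$. For (i), Equation~\eqref{reach1} gives $\Prx[v]{V_\tau=1}=v$ and Equation~\eqref{Nice2} gives $\Ex[v]{\tau_1 \mid V_\tau=1} \leq \tfrac{1}{3\delta^2}$; Markov's inequality with $T=\tfrac{3}{8\delta^2}$ therefore gives $\Prx[v]{\tau_1 \leq T} \geq v/9$. By the strong Markov property at $\tau_1$ and the monotonicity of ``never fall to $w^*$'' in the length of the interval,
\[
\Prx[v]{V_T \geq w^*} \;\geq\; \Prx[v]{\tau_1 \leq T}\cdot \Prx[1]{H_{w^*} > T},
\]
where $H_{w^*}$ denotes the hitting time of $w^*$ starting from $1$.

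The main technical obstacle is lower-bounding $\Prx[1]{H_{w^*} > T}$ by a positive constant. By Equation~\eqref{Nice3}, $\Ex[1]{H_{w^*}} = \tfrac{(1-w^*)^2}{\delta^2} \geq \tfrac{9}{16\delta^2}$, which strictly exceeds $T=\tfrac{3}{8\delta^2}$, but turning this mean bound into a tail bound requires the second moment. I would invoke the Paley-Zygmund inequality at level $\theta = T/\Ex[1]{H_{w^*}} \leq 2/3$, using the standard random-walk fact that $\Ex[1]{H_{w^*}^2} = O(\Ex[1]{H_{w^*}}^2)$ (provable via a martingale argument in the spirit of the one used for Equation~\eqref{Nice2}, or by the reflection correspondence with a simple random walk on $[w^*, 2-w^*]$ absorbed at both ends and starting from the midpoint). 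This yields $\Prx[1]{H_{w^*} > T} = \Omega(1)$ and hence $\Prx[v]{V_T \geq w^*} = \Omega(v)$, which combined with the upper bound completes the proof. The delicate point is that the specific constants $T=\tfrac{3}{8\delta^2}$ and $c=\tfrac{1}{4\delta^2}$ are tuned so that the Markov bound on $\tau_1$ and the Paley-Zygmund bound on $H_{w^*}$ both produce nontrivial constants simultaneously; loosening either one of $T, c$ too far in either direction would break one of the two estimates.
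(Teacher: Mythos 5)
Your proposal is correct, and its first half is the same as the paper's: both arguments lower-bound the probability of hitting $1$ within the trial period by $v/9$, via Equation~\eqref{Nice2} of Lemma~\ref{cl:nice} and Markov's inequality with $T=\tfrac{3}{8\delta^2}$. You diverge in the second factor. The paper uses the cruder sufficient acceptance condition $V_T\ge c\delta^2$ (since $\cv(x)\ge x/\delta^2$) and bounds the probability of acceptance after hitting $1$ by a Hoeffding bound on the endpoint displacement of the remaining walk, obtaining the explicit constant $1-2e^{-3/4}\approx 0.055$; you instead work with the exact threshold $w^*\in[\tfrac15,\tfrac14]$ and lower-bound the more restrictive event that the walk, once at $1$, never hits $w^*$ before time $T$, via Paley--Zygmund applied to the hitting time $H_{w^*}$. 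The one step you assert rather than prove, $\Ex[1]{H_{w^*}^2}=O\InParentheses{\Ex[1]{H_{w^*}}^2}$, is true and is fillable exactly by the iterated-Markov/reflection technique the paper already uses in Lemma~\ref{lem:hittingTime} (by reflection, $H_{w^*}$ is the exit time of a simple walk from $(w^*,2-w^*)$ started at the midpoint, where the conditional expected remaining exit time never exceeds the initial one, giving geometric tails at scale $\Ex[1]{H_{w^*}}$); spelling this out would make your write-up self-contained. Your route buys two things the paper leaves implicit: the matching upper bound (acceptance requires reaching $w^*$ before absorption, probability at most $v/w^*\le 5v$, so revenue is $O(\cv(v))$), which is genuinely needed for the stated $\Theta(\cv(v))$, and a treatment of the lower bound that avoids the boundary issues (absorption at $0$) glossed over in the paper's endpoint-Hoeffding step; the paper's route, in exchange, gives sharper explicit constants. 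One small inaccuracy: $\cv$ is not globally increasing (its derivative is negative within $O(\delta)$ of $v=1$), but since $\cv(x)\ge x/\delta^2\ge c$ for all $x\ge\tfrac14$, the acceptance set is still $\{x\ge w^*\}$ and your reduction stands.
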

\begin{proof}
 Suppose that the value of the risk neutral buyer at the end of the
  trial period, $V_T$ is at least $c\delta^2$. Then, noting that
  $\cv(x) \geq \frac{x}{\delta^2}$, the buyer accepts a \bin\
  price of $c$. We now lower bound the probability $Q(v)$ that $V_T
  \geq c\delta^2$ given $V_0 = v$. We can write $Q(v) \geq Q_1(v)
  Q_2$, where
$$Q_1(v) = \Prx{V_{\tau} = 1\text{ and } \tau \leq T | V_0 = v}\quad \text{ and }\quad Q_2 = \Prx{V_T \geq c\delta^2 | V_{\tau} = 1 \text{ and } \tau \leq T}.$$

\noindent By Equation~\eqref{Basic1} in Lemma~\ref{lem:RWbasic}, 
Equation~\eqref{Nice2} in Lemma \ref{cl:nice}, and Markov's inequality, with $T=\frac{3}{8\delta^2}$, we have $Q_1(v) \geq \frac{v}{9}$. 

Next, recall that if $X = \sum_{i=1}^{n} X_i$ with each $X_i$ being $\pm 1$
with probability half each, independently, then $\Prx{|X| \geq a} \leq
e^{-\frac{a^2}{2n}}$. We can bound $Q_2$ from below by setting $n=T-\tau$,
and $X_i$ to be $1/\delta$ times the change in the buyer's value at
step $\tau+i$, and requiring that the net change in value from step
$\tau$ to step $T$ is no more than $(1-c\delta^2)$. Thus,
\begin{align*}
Q_2 &\geq 1- \Prx{|X| \geq \frac{1-c\delta^2}{\delta}}\geq 1- 2e^{-\frac{(1-c\delta^2)^2}{2(T-\tau)\delta^2}}\geq 1 - 2e^{-3/4} = 0.055
\end{align*} 

Clearly, $\rftb \geq Q(v)\cdot c \geq \frac{0.055\cdot v}{9}\cdot\frac{1}{4\delta^2} = \Theta(\cv(v))$.
\end{proof}

\begin{remark}
An interesting fact about Theorems~\ref{thm:ftp} and~\ref{thm:ftpBIN} is that the proposed pricing
does not depend on the distribution of the buyer's initial value. This
demonstrates that in the \rw\ model the evolution of the value is a
more important phenomenon than the initial value at which the walk
starts. 
\end{remark}

\subsection{Comparing \bin\ and \ppp}
\label{sec:rnra}

Theorems~\ref{thm:rnrn} and \ref{thm:ftp} compare the \ppp\ revenue to the buyer's cumulative value under various risk profiles. As a direct consequence, \ppp\ gets a large fraction of the revenue of any \bin\ scheme as well. In this section, we make the comparison between \ppp\ and \bin\ more precise and show that in some cases \ppp\ beats \bin\ by far. The following theorem deals with the case of risk averse buyers.
(In Appendix~\ref{sec:app} we extend this result to the case of $\alpha$-risk averse buyers for some $\alpha\in [0,\infty)$ achieving a tradeoff between \ppp\ and \bin\ revenue as a function of the risk parameter.)
\begin{theorem}
\label{thm:rara}
With an infinitely risk averse buyer, there exists a constant price \ppp\ scheme with revenue $\rppp[\infty] \geq \frac{1}{4\delta}\rbin[\infty]$. 
\end{theorem}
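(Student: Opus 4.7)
My plan is to (i) determine the purchase behavior of an infinitely risk-averse buyer in each of the two schemes, (ii) set the PPP price to be exactly half of the optimal BIN acceptance threshold, and (iii) apply the hitting-time formula from Lemma~\ref{cl:nice} to compare revenues term by term.

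First I would pin down buyer behavior. Under BIN at price $p$, an infinitely risk-averse buyer with $V_0 = v$ accepts iff the \emph{worst-case} cumulative value exceeds $p$. The worst realization is a monotone descent $v, v-\delta,\ldots, \delta$ to $0$, giving a minimum cumulative value of $v(v+\delta)/(2\delta)$. So the optimal BIN has some threshold $v^* \in (0,1]$ with price $p^* = v^*(v^*+\delta)/(2\delta)$ and $\rbin[\infty] = p^* \Pr[V_0 \geq v^*]$. Under a constant-price PPP at price $c$, the strategy ``buy iff $V_t \geq c$'' yields per-step utility $V_t - c \geq 0$ deterministically and is therefore feasible; conversely, buying when $V_t < c$ risks negative utility, since the walk may descend straight to $0$. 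Hence the infinitely risk-averse PPP buyer follows this threshold rule exactly, and the number of purchases of a buyer with $V_0 = v \geq c$ is the hitting time of $c - \delta$ starting from $v$, namely $h_{v, c-\delta} = (v-c+\delta)(2-v-c+\delta)/\delta^2$ by Lemma~\ref{cl:nice}~\eqref{Nice3}.

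Next I would take $c = v^*/2$. Every buyer who accepts BIN also accepts this PPP, so $\rppp[\infty] \geq c \cdot \min_{v \geq v^*} h_{v, c-\delta} \cdot \Pr[V_0 \geq v^*]$. The derivative of $h_{v, c-\delta}$ in $v$ is $(2 - 2v)/\delta^2 \geq 0$ on $[v^*, 1]$, so the minimum is attained at $v = v^*$ and equals $(v^*/2 + \delta)(2 - 3v^*/2 + \delta)/\delta^2$. The claim $\rppp[\infty] \geq \tfrac{1}{4\delta}\rbin[\infty]$ then reduces to the algebraic inequality $4(v^*/2 + \delta)(2 - 3v^*/2 + \delta) \geq v^* + \delta$, which unfolds to $3v^*(1-v^*) + \delta(7 - 4v^*) + 4\delta^2 \geq 0$ — termwise non-negative for $v^* \in [0, 1]$.

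The main obstacle I expect is the \emph{choice of PPP price}: the natural guess $c = v^*$ only yields $\Theta(1)$ purchases in expectation from the threshold buyer and would lose a factor of $1/\delta$. Dropping to $c = v^*/2$ is the sweet spot that maximizes $c \cdot (v^* - c)$, the leading-order product in the hitting-time formula, and this is precisely what recovers the $\Theta(1/\delta)$ advantage of PPP over BIN. Minor bookkeeping — rounding $c$ up to a multiple of $\delta$ and the degenerate case $v^* = 0$ where $\rbin[\infty] = 0$ trivially — should not affect the argument.
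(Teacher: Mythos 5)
Your proposal is correct and follows essentially the same route as the paper's proof: characterize the infinitely risk-averse BIN acceptance via the worst-case monotone descent (giving $\rbin[\infty]\approx\max_t \frac{t^2}{2\delta}(1-F(t))$), set the constant \ppp\ price to half the BIN threshold, and lower-bound revenue using the hitting-time formula of Lemma~\ref{cl:nice} restricted to buyers with $V_0\ge v^*$. Your only deviations are cosmetic (exact $\pm\delta$ bookkeeping and minimizing the hitting time at $v=v^*$ instead of bounding the two factors separately), and the algebra checks out.
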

\begin{proof}
{\bf Revenue of \bin.} In \bin, the infinitely risk-averse buyer will first compute his cumulative value under the worst possible evolution of values, namely, when his value goes down by $\delta$ in every single round, and thus, will hit $0$ in $v/\delta$ rounds. The buyer's worst possible cumulative value is then $\frac{v^2}{2\delta}$. From this it follows that $\rbin[\infty] = \max_t \frac{t^2}{2\delta}(1-F(t))$.

\noindent
{\bf Revenue of \ppp.} Let $t^* = \argmax_t \frac{t^2}{2\delta}(1-F(t))$. Consider a constant price \ppp\ scheme with a price of $\frac{t^*}{2}$ per round. We get the revenue from an infinitely risk averse buyer to be:
\begin{align*}
\rppp[\infty] &= \frac{t^*}{2} \text{ times the expectation over $V_0$ of the time to hit } t^*/2\\
&= \frac{t^*}{2}\frac{1}{\delta^2}\int_{\frac{t^*}{2}}^1\left(v-\frac{t^*}{2}\right)\left(2-v-\frac{t^*}{2}\right)\dens(v)\mathrm{d}v\\
&\geq \frac{(t^*)^2}{4\delta^2}\int_{t^*}^1\left(2-v-\frac{t^*}{2}\right)\dens(v)\mathrm{d}v\\
&\geq \frac{(t^*)^2}{8\delta^2}\int_{t^*}^1\dens(v)\mathrm{d}v = \frac{(t^*)^2}{8\delta^2}(1-F(t^*))= \frac{1}{4\delta}\rbin[\infty]
\end{align*}
Here the first inequality follows by restricting the limit of the integral to values greater than $t^*$ and the second follows by noting $v,t^*\le 1$.
\end{proof}

\begin{remark}
The factor $\Theta(\frac{1}{\delta})$ is tight. Consider a point mass distribution at $v$. Here, we have $\rbin[\infty] = \frac{v^2}{2\delta}$, while $\rppp[\infty] = \max_p p\cdot \frac{(v-p)(2-p-v)}{\delta^2} \leq \frac{v^2}{\delta^2}$. 
\end{remark}

Next we show that in some settings, \ppp\ with a constant price obtains revenue from a risk averse buyer that compares well even to the \bin\ revenue from a risk neutral buyer.
\begin{theorem}
\label{thm:rnra}
There exists a \ppp\ scheme, that, with an {\em infinitely risk-averse buyer} gets a revenue at least a $\frac{\mu}{10}$ fraction of the revenue optimal BIN scheme with a {\em risk-neutral buyer}, where $\mu$ is the 
monopoly price of the distribution $\dist$ of the initial value $V_0$.
\end{theorem}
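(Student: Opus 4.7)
The plan is to fix a constant-price \ppp\ scheme at price $\mu/2$, argue that an infinitely risk-averse buyer's strategy is to buy whenever his current value is at least $\mu/2$, and then bound the resulting revenue against $\rbin$ using the upper bound in Lemma \ref{cl:OptRNBIN}.

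First I would characterize the buyer's behavior. Under a constant \ppp\ price $p = \mu/2$, if the buyer buys in round $t$ only when $V_{t-1} \geq p$, each round's instantaneous utility $V_{t-1} - p$ is non-negative, so the total utility is guaranteed non-negative. Conversely, if the buyer ever pays $p$ when $V_{t-1} < p$, the walk could in the worst case descend to $0$ immediately, leaving negative total utility; hence an infinitely risk-averse buyer never buys when his current value is below $p$. Thus the number of purchases, starting from $V_0 = v \geq p$, equals (up to lower-order discretization terms) the hitting time from $v$ to $p$ for our random walk, which by Equation \eqref{Nice3} of Lemma \ref{cl:nice} equals $h_{v,p} = (v - p)(2 - v - p)/\delta^2$.

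Second, I would compute
\[
\rppp[\infty] \;=\; \frac{\mu}{2}\int_{\mu/2}^{1} \frac{(v - \mu/2)(2 - v - \mu/2)}{\delta^2}\, \dens(v)\, dv
\]
and lower-bound this by restricting the integral to $v \geq \mu$. On this range, $v - \mu/2 \geq \mu/2$, and using $v \leq 1$ together with $\mu \leq 1$ gives $2 - v - \mu/2 \geq 1/2$. Therefore
\[
\rppp[\infty] \;\geq\; \frac{\mu}{2}\cdot\frac{1}{\delta^2}\cdot\frac{\mu}{2}\cdot\frac{1}{2}\cdot(1 - \dist(\mu)) \;=\; \frac{\mu^2(1 - \dist(\mu))}{8\delta^2}.
\]
By Lemma \ref{cl:OptRNBIN}, $\rbin \leq 5\rmye/(4\delta^2) = 5\mu(1 - \dist(\mu))/(4\delta^2)$, so taking the ratio yields
\[
\frac{\rppp[\infty]}{\rbin} \;\geq\; \frac{\mu^2(1 - \dist(\mu))/(8\delta^2)}{5\mu(1 - \dist(\mu))/(4\delta^2)} \;=\; \frac{\mu}{10},
\]
as required.

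The main obstacle is the characterization of the infinitely risk-averse buyer's strategy, since in principle an infinitely risk-averse player might benefit from a more complex stopping rule. The argument above resolves this by using the worst-case absorption scenario to rule out any purchase below price, and by noting that any strategy which stops while the current value is above $\mu/2$ only weakly decreases the expected number of purchases (since every such purchase has non-negative instantaneous utility and stopping is always available). Minor care is needed to handle the step-size rounding (stopping when value first falls to $\mu/2 - \delta$ rather than exactly $\mu/2$), but this only contributes lower-order terms in $\delta$ and does not affect the ratio.
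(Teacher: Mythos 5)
Your proposal is correct and follows essentially the same route as the paper: a constant-price \ppp\ scheme at price $\mu/2$, revenue expressed via the hitting time $h_{v,\mu/2}=(v-\mu/2)(2-v-\mu/2)/\delta^2$, the integral restricted to $v\ge\mu$ with the bounds $v-\mu/2\ge\mu/2$ and $2-v-\mu/2\ge 1/2$, and a comparison against $\rbin\le 5\rmye/(4\delta^2)$ from Lemma~\ref{cl:OptRNBIN}. The only addition is your explicit justification of the infinitely risk-averse buyer's threshold behavior, which the paper takes for granted.
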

\begin{proof}

 Recall that by Lemma \ref{cl:OptRNBIN}, the optimal risk-neutral \bin\ revenue satisfies $\rbin \leq \frac{5\rmye}{4\delta^2}$.

We now consider the infinitely risk-averse \ppp\ revenue for a constant per-play price $p$. This revenue is given by $\rppp[\infty] = p\cdot\int_p^1 h_{vp}f(v)\mathrm{d}v$, where $h_{vp}$, as before, is the time taken by the random walk starting at $v$ to go strictly below $p$. By Equation~\eqref{Nice3} in Lemma~\ref{cl:nice}, we have $h_{vp} = \frac{(v-p)(2-p-v)}{\delta^2}$. Now consider a price of $p =\frac{\mu}{2}$. Since $p<1/2$, we have $2-p-v \geq \frac{1}{2}$. Thus, as in the proof of the previous lemma, we have,
\begin{align*}
  \rppp[\infty] & \geq \frac{p}{2\delta^2}\int_p^1 (v-p)f(v)\mathrm{d}v
\ge \frac{\mu}{4\delta^2}\frac{\mu}{2}\int_{\mu}^1 f(v)\mathrm{d}v 
 = \frac{\mu}{8\delta^2}\rmye \geq \frac{\mu}{10}\rbin.
\end{align*}
\end{proof}

\begin{remark}
When the monopoly price $\mu$ of the initial value distribution $\dist$ is a constant, \ppp\ with a risk-averse buyer gets a constant fraction of \bin\ with a risk-neutral buyer. Otherwise, \ppp\ gets a very small fraction of the \bin\ revenue. This is not just an artifact of our analysis: suppose that the buyer's initial value is $v$ with probability $1$. Then \bin\ gets a revenue of $\Omega(v/\delta^2)$. On the other hand, \ppp\ with a constant price of $p$ can obtain at most $2p(v-p)/\delta^2$, which is a factor of $\Omega(v)$ smaller than the $\rbin$.
\end{remark}

\section{Conclusions}
\label{sec:con}
In this paper, we have studied  pricing in settings where a consumer learns his value for a good only as he uses it. We have seen that in such a setting, the use of ``pay-per-play" pricing can be advantageous to both the seller, in terms of revenue, and to the buyer, as a mechanism for reducing risk. Likewise, offering a free trial period can improve the seller's revenue.

The results in this paper are just first steps. Some natural future questions include the following: (1) What is the optimal seller mechanism and how does it depend on how risk-averse the buyer is? As an intermediate model between \ppp\ and \bin\ pricing, it may help to price usage at regular intervals, as a sort of ``subscription'' pricing. (2) Within the framework of \ppp\ pricing, is it a good idea to set a constant price per usage, or does it help to increase or decrease price over time? (3) What about other models for value evolution? For example, it would be interesting to incorporate drift, either positive or negative, into the buyer's value evolution. Likewise, can we say something interesting when per-step changes in value are not always small? (4) Can we generalize these results to incorporate other models of risk? (5) What happens when multiple schemes co-exist, e.g. when the consumer is offered both a \bin~price and a \ppp~scheme? 

It would be very interesting to also have empirical estimation of consumer behavior to complement the theoretical results. 
Experiments that shed light on what is a good model of value evolution would be extremely useful. 
Similarly, risk aversion models that explain consumer behavior in these markets would add great value.

\bibliographystyle{plainnat}
\bibliography{bib_file}
\appendix

\section{$\alpha$-risk averse BIN vs $\alpha$-risk averse PPP}
\label{sec:app}
We have shown that in the two extreme cases, namely, with risk neutral buyers (Section~\ref{sec:rnrn}) and with infinitely risk averse buyers (Section~\ref{sec:rnra}) PPP does well compared to BIN. Is it possible to compare the revenue of BIN and the revenue of PPP as a function of how risk-averse the buyer is? We address this question in this section. 

\paragraph{$\alpha$-risk averse buyer.}
We parameterize risk by the maximum loss the buyer is willing to suffer.  Recall that a completely risk averse buyer not only ensures that his future utility is maximized in expectation, but also ensures that his future utility will be non-negative with probability $1$. On the other hand, an $\alpha$-risk averse buyer, for some $\alpha>0$, is willing to tolerate a loss of up to $1/\alpha$ while maximizing his expected future utility. 

Formally, we have:
\begin{itemize}
\item {\bf $\alpha$-risk averse BIN buyer.} Given a BIN price of $p$, for every $\alpha \geq 0$, an $\alpha$-risk averse buyer buys if and only if his future utility is non-negative in expectation and no smaller that $-1/\alpha$ with probability $1$. Formally, $\cv(V_0) - p \geq 0$, and, $\Prx{\sum_{0\le t <\st} V_t - p < -\frac{1}{\alpha}} = 0$.
\item {\bf $\alpha$-risk averse PPP buyer.} The behavior of the buyer in a PPP mechanism is more complex. As we mentioned earlier, the buyer buys the product for some number of steps and then stops buying it. After the $t$th step, if the buyer's value $V_t$ is larger than the current price $\price_t$ in the PPP scheme, then it is always in the buyer's best interest to continue buying at that step. If $V_t$ is less than $\price_t$, however, then the buyer may still decide to continue buying, banking on the possibility of obtaining a large positive utility later on. Fixing the vector of PPP prices $\prices$, the buyer's strategy specifies as a function of his current value whether or not he should continue to purchase the product. We can define the optimal such strategy via backwards induction.  In essence, the optimal PPP buyer strategy maximizes his expected utility, subject to the constraint that his expected utility never falls below $\frac{-1}{\alpha}$. We omit the formal specification of the optimal strategy here because for the price vectors we construct, the optimal strategy is immediate.
\end{itemize}

\begin{remark}
An $\alpha$-risk averse buyer with $\alpha = 0$ corresponds to a risk neutral buyer, and at $\alpha = \infty$ corresponds to an infinitely risk averse buyer. 
\end{remark}

We now construct a PPP scheme that obtains a revenue comparable to that of the optimal \bin\ scheme under any risk profile, that is, any $\alpha> 0$. Our \ppp\ scheme charges a fixed price of $p$ per usage for the first few rounds, and then charges a price of $0$ for the remaining rounds. We call such a scheme a {\em Rent-To-Own} scheme: after the buyer has paid some minimum amount of money to the seller, he gets unlimited use of the product for free. Recall that $\rppp[\alpha]$ and $\rbin[\alpha]$ denote the revenues of the optimal \ppp\ and \bin\ schemes respectively, with an $\alpha$-risk averse buyer.
\begin{theorem}\label{thm:alphaRA} 
Given an $\alpha$-risk averse buyer with $\alpha \in (0,\infty)$, let $v^*$ be the smallest buyer value that accepts the optimal $\bin$ price. Then,
\begin{enumerate}
\item $\rbin[\alpha] \leq \left[\frac{(v^*)^2}{2\delta} + \min(\frac{1}{\alpha},\cv(v^*))\right](1-\dist(v^*))$
\item There exists a PPP scheme with revenue $\rppp[\alpha]  \geq \Theta(1)\cdot\left[\frac{(v^*)^2}{\delta^2} + \min(\frac{1}{\alpha},\cv(v^*))\right](1-\dist(v^*))$ 
\end{enumerate}
In particular, for all $\alpha$, $\rppp[\alpha] \geq \frac{1}{32}\rbin[\alpha]$, and when $\frac{1}{\alpha} = O(\frac{(v^*)^2}{2\delta})$, we have $\rppp[\alpha] = \omega(\rbin[\alpha])$. 
\end{theorem}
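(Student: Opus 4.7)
Let $p^{\bin}$ be the optimal $\bin$ price. Acceptance at the threshold $V_0 = v^*$ imposes two constraints: the expected-utility constraint gives $p^{\bin} \le \cv(v^*)$, and the worst-case-utility constraint gives $p^{\bin} \le (v^*)^2/(2\delta) + 1/\alpha$, where the worst cumulative value starting at $v^*$ is realized by the monotonically-descending path and equals $v^*(v^*+\delta)/(2\delta) = (v^*)^2/(2\delta) + O(v^*)$. Applying the elementary inequality $\min(a+b,c) \le a + \min(b,c)$ with $a = (v^*)^2/(2\delta)$, $b = 1/\alpha$, $c = \cv(v^*)$ delivers $p^{\bin} \le (v^*)^2/(2\delta) + \min(1/\alpha, \cv(v^*))$ (absorbing the $O(v^*)$ correction into the implicit constants), and multiplying by $1 - \dist(v^*)$ finishes Part 1.

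\textbf{Part 2 (PPP lower bound).} I would use the PPP scheme that levies a one-time surcharge $c$ on top of a constant per-round price of $v^*/2$, i.e., $p_1 = c + v^*/2$ and $p_t = v^*/2$ for $t \ge 2$, with $c = \tfrac{1}{5}\min(1/\alpha, \cv(v^*))$. A direct computation from Corollary~\ref{cl:SW} and Equation~\eqref{Nice3} of Lemma~\ref{cl:nice} yields the key slack estimate $\cv(v^*) - (v^*/2) T(v^*) = v^*\bigl(2(1-\delta+\delta^2) - (v^*)^2\bigr)/(2\delta^2) \ge v^*/(4\delta^2) \ge \cv(v^*)/5$. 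I would then verify, for every buyer with $V_0 = v \ge v^*$, three claims: (i) the strategy ``play until $V_t$ hits $0$'' is $\alpha$-risk-averse feasible, since along the worst (monotonically descending) path its utility is $v(v-v^*)/(2\delta) + v/2 - c \ge -1/\alpha$, while in expectation it is $\cv(v) - c - (v^*/2) T(v) \ge 0$ using the slack estimate; (ii) by Theorem~\ref{thm:rnrn} applied at the constant per-round price $v^*/2 \le 1/2$, this same strategy is the risk-neutral utility maximizer, hence it remains the $\alpha$-risk-averse optimum (imposing the worst-case constraint can only shrink the feasible set of strategies); (iii) the resulting revenue per participating buyer is $c + (v^*/2) T(v) \ge c + (v^*)^2/(2\delta^2)$, where the last step uses $T(v) \ge T(v^*) \ge v^*/\delta^2$ for $v \ge v^*$. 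Integrating against $\dist$ produces the lower bound $\rppp[\alpha] \ge \Theta(1)\cdot\bigl[(v^*)^2/\delta^2 + \min(1/\alpha,\cv(v^*))\bigr](1-\dist(v^*))$.

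\textbf{Consequences and main subtlety.} The comparison $\rppp[\alpha] \ge \tfrac{1}{32}\rbin[\alpha]$ follows term-by-term: $(v^*)^2/\delta^2 \ge (v^*)^2/(2\delta)$ for every $\delta \le 1$, so each term of the PPP lower bound dominates the matching term of the BIN upper bound, and the $1/32$ emerges after tracking the constants picked up in Part 2. For the $\omega$-claim, whenever $1/\alpha = O\bigl((v^*)^2/(2\delta)\bigr)$ the BIN bound collapses to $O\bigl((v^*)^2/\delta\bigr)(1-\dist(v^*))$, while the PPP bound retains the much larger $(v^*)^2/\delta^2$ term, giving a ratio $\Omega(1/\delta) = \omega(1)$ as $\delta \to 0$. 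The main technical subtlety, and the nontrivial content of the argument, is step (ii) of Part 2: the seller can only collect the revenue of step (iii) if the $\alpha$-risk-averse buyer actually chooses ``play until $V_t$ hits $0$''. Theorem~\ref{thm:rnrn}'s characterization of ``play to $0$'' as the \emph{unconstrained} optimum at per-round prices $\le 1/2$ is precisely what rules out any alternative $\alpha$-risk-averse feasible strategy yielding strictly higher expected utility, and so pins down the buyer's behavior.
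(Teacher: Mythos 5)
Your Part 1 is essentially the paper's argument (two upper bounds on the optimal \bin\ price, combined via $\min(a+b,c)\le a+\min(b,c)$), so no issue there. Part 2, however, has a genuine gap, and it sits exactly where you flagged the ``main subtlety.'' Your claim (i) is false under the paper's notion of $\alpha$-risk aversion: the constraint (as in the \bin\ definition, and as used in the paper's own \ppp\ argument) is that the realized loss stays above $-1/\alpha$ \emph{with probability $1$}, and for the strategy ``play until $V_t$ hits $0$'' under a perpetual per-round price $v^*/2>0$ the worst case is \emph{not} the monotonically descending path. With positive probability the value drops below $v^*/2$ and then oscillates near $\delta$ for arbitrarily many rounds before absorbing at $0$; each such round costs about $v^*/2-O(\delta)>0$, so the realized loss of ``play to $0$'' is unbounded and exceeds $1/\alpha$ with positive probability for every finite $\alpha$. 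Hence the strategy is infeasible, your claim (ii) collapses (the inference ``the constraint only shrinks the feasible set, so the unconstrained optimum remains optimal'' requires that optimum to remain feasible, which is precisely what fails), and the revenue $c+(v^*/2)T(v)$ in (iii) is not what the seller collects. The damage is not only to the constants: your participation argument for the surcharge $c=\tfrac15\min(1/\alpha,\cv(v^*))$ uses the slack $\cv(v)-(v^*/2)T(v)$, which is the expected surplus of the infeasible play-to-$0$ strategy; the constrained buyer's achievable surplus can be smaller than $c$ (e.g.\ at $v=v^*$ near $1$, the loss-free strategy ``stop when $V_t<v^*/2$'' yields surplus about $v^*/8\delta^2$ while $c$ can be about $v^*/4\delta^2$), so even participation of the threshold type is unclear. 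What does survive of your scheme is only the constant-price component: since a buyer never stops while $V_t$ exceeds the price, revenue at least $(v^*/2)\,h_{v,v^*/2}=\Omega\bigl((v^*)^2/\delta^2\bigr)$ per served buyer is guaranteed — but that is essentially Theorem~\ref{thm:rara}, and it does not deliver the $\min(1/\alpha,\cv(v^*))$ term, which is the whole point when $1/\alpha$ dominates.

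The paper's proof avoids this by a case split and, in the regime where the $\min(1/\alpha,\cv(v^*))$ term matters, a \emph{Rent-To-Own} scheme: a constant price $\min\bigl(\tfrac12,\tfrac{1}{24\alpha\cv(v^*)}\bigr)$ charged only for $24\,\cv(v^*)$ rounds and $0$ thereafter. Because only finitely many rounds are priced, the total possible payment — and hence the worst-case loss — is capped at $1/\alpha$ by construction, so ``keep buying'' is feasible for the $\alpha$-risk-averse buyer and, by the computation behind Theorem~\ref{thm:rnrn}, also expected-utility optimal. The price for truncating the charged horizon is paid via a tail bound on the hitting time (Lemma~\ref{lem:hittingTime}), which shows $\Ex[v^*]{\min(h_{v^*,0},24\,\cv(v^*))}\ge\tfrac34\cv(v^*)$, so only a constant factor of revenue is lost. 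If you want to keep your one-time-surcharge idea, you would need an analogous device that bounds the buyer's worst-case exposure (e.g.\ zero out the per-round price after a fixed number of rounds, or let the buyer's stopping rule respect a loss budget and re-do the revenue accounting); as written, the scheme with a perpetual positive price cannot extract the $\min(1/\alpha,\cv(v^*))$ term from a finitely risk-averse buyer.
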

\begin{proof}
{\bf BIN revenue.} If a buyer with value $v^*$ accepts the optimal $\bin$ price, it follows that the optimal $\bin$ price $p^* \leq\cv(v^*)$. Also, by the definition of an $\alpha$-risk averse buyer, the optimal $\bin$ price $p^*$ is at most $\frac{1}{\alpha}$ plus the cumulative value obtained by $v^*$ when the value evolves by decreasing in every round. The latter is the area of a triangle with height $v^*$ and base $\frac{v^*}{\delta}$. Thus, the optimal $\bin$ price $p^* \leq \frac{1}{\alpha} + \frac{(v^*)^2}{2\delta}$. Since we have shown two upper bounds on $p^*$, it follows that $p^* \leq \min\left(\frac{1}{\alpha}+\frac{(v^*)^2}{2\delta}, \cv(v^*)\right) \leq \frac{(v^*)^2}{2\delta} + \min(\frac{1}{\alpha},\cv(v^*))$. Since only buyers with value no smaller than $v^*$ buy, $\rbin[\alpha] \leq \left[\frac{(v^*)^2}{2\delta} + \min(\frac{1}{\alpha},\cv(v^*))\right](1-\dist(v^*))$.

{\bf PPP revenue.} Suppose that among the two terms in the upper bound of $p^*$, namely 
$\frac{(v^*)^2}{2\delta}$ and $\min(\frac{1}{\alpha},\cv(v^*))$, we have $\frac{(v^*)^2}{2\delta} \geq \sqrt{\delta}\cdot\min(\frac{1}{\alpha},\cv(v^*)) $. In this case the proof of Theorem~\ref{thm:rara} shows that there exists a constant price PPP with revenue $\frac{(v^*)^2}{8\delta^2}(1-\dist(v^*)) \geq \Omega(\frac{1}{\sqrt{\delta}})\rbin[\alpha]$, proving the theorem, and showing that $\rppp[\alpha] = \omega(\rbin[\alpha])$ in this case. 

Supoose on the other hand, $\frac{(v^*)^2}{2\delta} < \sqrt{\delta}\min(\frac{1}{\alpha},\cv(v^*))$. Consider a dual price PPP scheme that charges a price of $\min\big(\frac{1}{2}, \frac{1}{c\alpha\cv(v^*)}\big)$ for $c\cv(v^*)$ rounds, and $0$ thereafter, where $c = 24$. Note that with this PPP scheme, the total possible loss for the buyer is upper bounded by $\frac{1}{c\alpha\cv(v^*)}\cdot c\cv(v^*) \leq \frac{1}{\alpha}$. 
From the proof of Theorem~\ref{thm:rnrn} we know that given a price no larger than $\frac{1}{2}$, every buyer type except $0$ gets a non-negative future utility by continually buying till the end, where as the buyer is sure to get a $0$ utility by rejecting in this round. Since the current round price of $\min\left(\frac{1}{2},\frac{1}{c\alpha\Ex{CV(v^*)}}\right) \leq \frac{1}{2}$, it follows that every buyer type except $0$ will buy till the end, resulting in revenue of 
$$\rppp[\alpha] \geq \min\left(\frac{1}{2},\frac{1}{c\alpha\cv(v^*)}\right)\cdot (1-\dist(v^*))\cdot \Ex[v^*]{\min\left(h_{v^*,0}, c\cv(v^*)\right)}.$$

The term $\Ex[v^*]{\min\left(h_{v^*,0}, c\cv(v^*)\right)}$ gives a lower bound on the number of rounds that the seller can collect a price of $\min\left(\frac{1}{2},\frac{1}{c\alpha\cv(v^*)}\right)$. For the buyer with value $v^*$, the number of rounds where the seller can collect this price is exactly $\Ex[v^*]{\min\left(h_{v^*,0}, c\cv(v^*)\right)}$, and the number of rounds is larger for larger $v^*$. Given that $\Ex{h_{v^*,0}} = \frac{v^*(2-v^*)}{\delta^2} \leq \frac{2v^*}{\delta^2}$, and $\cv(v^*) \geq \frac{v^*}{\delta^2}$, we have $\cv(v^*) \geq \frac{\Ex{h_{v^*,0}}}{2}$. From Lemma~\ref{lem:hittingTime} we know that $\Prx{h_{v^*,0} \geq k\cdot\Ex{h_{v^*,0}}} \leq 2^{-\lfloor\frac{k}{2}\rfloor}$. Setting $c = 24$, we get
\begin{align*}
\Ex[v^*]{\min\left(h_{v^*,0}, 24\cv(v^*)\right)} &\geq \Ex[v^*]{\min\left(h_{v^*,0}, 12\Ex{h_{v^*,0}}\right)}\\
&\geq \Ex{h_{v^*,0}} - \sum_{\ell=6}^{\infty}(2^{-\ell}-2^{-(\ell+1)})\cdot (2\ell+2) \cdot \Ex{h_{v^*,0}}\\
&= \Ex{h_{v^*,0}}\cdot\left(1-0.5\cdot\sum_{\ell=6}^{\infty}(2\ell+2)\cdot 2^{-\ell}\right)\\
&= \frac{3}{4}\Ex{h_{v^*,0}}\\
&\geq \frac{3}{4}\cv(v^*)
\end{align*}

If $\min\left(\frac{1}{2},\frac{1}{24\alpha\cv(v^*)}\right) = \frac{1}{2}$, then 
$$\rppp[\alpha] \geq \frac{1}{2}\cdot \frac{3}{4}\cdot\cv(v^*)(1-\dist(v^*)) = \frac{3}{8}\cv(v^*)(1-\dist(v^*)) \geq \frac{3}{8}\rbin[\alpha],$$
where the last inequality follows from $\rbin[\alpha] \leq \cv(v^*)(1-\dist(v^*))$. 

If $\min\left(\frac{1}{2},\frac{1}{24\alpha\cv(v^*)}\right) = \frac{1}{24\alpha\cv(v^*)}$, then
$$\rppp[\alpha] \geq \frac{1}{24\alpha\cv(v^*)}\cdot \frac{3}{4}\cdot\cv(v^*)(1-\dist(v^*)) \geq \frac{1}{32\alpha}(1-\dist(v^*)) \geq \frac{1}{32(1+\sqrt{\delta})}\rbin[\alpha],$$
where the last inequality follows from $\rbin[\alpha] \leq \left(\frac{1}{\alpha} + \frac{(v^*)^2}{2\delta}\right)(1-\dist(v^*)) \leq (1+\sqrt{\delta})\frac{1}{\alpha}(1-\dist(v^*))$, given our assumption that $\frac{(v^*)^2}{2\delta} < \sqrt{\delta}\min(\frac{1}{\alpha},\cv(v^*))$. As $\delta \to 0$, this is a $32$ approximation, thus proving the theorem. 
\end{proof}
\begin{lemma}
\label{lem:hittingTime}
The time $h_{v,0}$ to hit $0$ starting from $v$ satisfies $\Prx{h_{v,0} \geq k\cdot\Ex{h_{v,0}}} \leq 2^{-\lfloor\frac{k}{2}\rfloor}$ for $k \geq 2$.
\end{lemma}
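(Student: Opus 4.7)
The plan is to apply Markov's inequality iteratively, exploiting the strong Markov property of the random walk. Let $m := \Ex[v]{h_{v,0}} = v(2-v)/\delta^2$ from Equation~\eqref{Nice3}. The $k=2$ base case is immediate from Markov: $\Prx[v]{h_{v,0} \geq 2m} \leq m/(2m) = 1/2$, matching the bound $2^{-\lfloor 2/2\rfloor} = 1/2$.

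For $k \geq 3$, I would prove by induction on $j$ that $\Prx[v]{h_{v,0} \geq 2jm} \leq 2^{-j}$. Given $\{h_{v,0} \geq 2jm\}$, the strong Markov property says the walk sits at some $V_{2jm} \in (0,1]$ and the remaining time to hit $0$ is distributed as $h_{V_{2jm},0}$, independent of the past. Applying Markov's inequality to this remaining time gives a conditional tail bound of $\Ex[V_{2jm}]{h_{V_{2jm},0}}/(2m)$, and the inductive factor of $1/2$ falls out provided this conditional expectation is at most $m$. Setting $j = \lfloor k/2\rfloor$ and using $\{h_{v,0}\geq km\}\subseteq\{h_{v,0}\geq 2\lfloor k/2\rfloor m\}$ then yields the lemma.

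The technically delicate point—and the main obstacle—is justifying that $\Ex[V_{2jm}]{h_{V_{2jm},0}} \leq m$ on the conditioning event. Since $\Ex[u]{h_{u,0}} = u(2-u)/\delta^2$ is maximized at $u=1$, the unconditional worst-case state admits $\Ex[1]{h_{1,0}} = 1/\delta^2$, which can exceed $m$ when $v(2-v)<1$. The clean way to close this gap is a coupling argument: couple walks started at every $u\in(0,1]$ using the same $\pm\delta$ increments, so that $h_{u,0}$ is stochastically non-decreasing in $u$, and then argue that conditional on non-absorption by time $2jm$ the walk's position does not drift above $v$ in a useful stochastic sense (e.g., via the quasi-stationary/eigenfunction decomposition of the reflected walk on $[0,1]$, equivalently the absorbing walk on $[0,2]$ from Lemma~\ref{cl:nice}). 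Once this comparison is in place, Markov's inequality in each iteration contributes its factor of $1/2$, and the geometric tail in $\lfloor k/2\rfloor$ follows by straightforward induction.
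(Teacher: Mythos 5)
Your skeleton -- Markov's inequality for the base case, then iterating in blocks of length $2\Ex{h_{v,0}}$ -- is the same as the paper's, and the base case is fine. The problem is that your inductive step rests on exactly the claim you yourself flag as the ``main obstacle,'' namely that the expected remaining hitting time, \emph{conditioned on survival} of the previous blocks, is at most $m=\Ex{h_{v,0}}$, and the route you sketch for closing it cannot work: conditioning on non-absorption biases the walk \emph{upward}, toward the quasi-stationary profile of the doubled walk on $[0,2]$ (proportional to $\sin(\pi x/2)$ and centered at $1$), under which the expected remaining time is $\Theta(1/\delta^2)$, dwarfing $m=v(2-v)/\delta^2$ when $v$ is small. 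So the assertion that ``conditional on non-absorption the position does not drift above $v$ in a useful stochastic sense'' is false, not merely unproven. A tiny explicit check: take $\delta=1/3$ and $V_0=1/3$, so $m=5$. After two steps, conditioned on not having been absorbed, $V_2$ is $1/3$ or $1$ with probability $1/2$ each, and the conditional expected remaining time is $(5+9)/2=7>5=m$. Hence no coupling/monotonicity argument about the conditional law of the position can deliver the factor $1/2$ per block in the form you set up.

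For comparison, the paper's proof deliberately avoids any statement about the conditional law of the position: it passes to the doubled walk on $[0,2]$, where the stopped value is a martingale, applies the optional stopping theorem to get that the stopped position $X$ satisfies $\Ex{X}=v$, and then uses $\Ex{X^2}\ge(\Ex{X})^2$ to bound the \emph{unconditional} expected remaining time (which is $0$ on absorbed paths, since the remaining time from $0$ or $2$ vanishes) by $\Ex{h_{v,0}}$. That is the mechanism you would need to emulate rather than a stochastic-domination statement about the conditioned walk. Be aware, though, that upgrading such an unconditional bound on the remaining time into a multiplicative factor of $1/2$ \emph{per block} is precisely the delicate point of the whole argument: each application of Markov's inequality must be stated with the conditioning event made explicit, and as the example above shows, the conditional version of the expectation bound genuinely fails for small $v$, so any complete write-up has to confront this head-on rather than iterate informally.
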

\begin{proof}
From Markov's inequality we have $\Prx{h_{v,0} \geq 2\cdot\Ex{h_{v,0}}} \leq \frac{1}{2}$. By reflection principle, a random walk with reflection at $1$ hitting $0$ is the same as a random walk without reflection hitting one of $0$ or $2$. Thus, the random walk with reflection at $1$ not having hit $0$ after $2\Ex{h_{v,0}}$ steps is equivalent to the random walk without reflection not having hit either of $0$ or $2$ after $2\Ex{h_{v,0}}$ steps.  
The position of the random walk is now a random variable $X$ with expectation $v$. Our goal is to repeatedly apply Markov's inequality. To do this, we need an upper bound on $\Ex{h_{X,0}}$.  We have $\Ex{h_{X,0}} = \frac{2\Ex{X}-\Ex{X^2}}{\delta^2} \leq \frac{2\Ex{X}-\Ex{X}^2}{\delta^2}$. Substituting $\Ex{X} = v$, we get $\Ex{h_{X,0}} \leq \frac{2v-v^2}{\delta^2} = \Ex{h_{v,0}}$. Thus, the probability that, after $2\Ex{h_{v,0}}$ more steps the random walk did not hit $0$ or $2$ incurs an additional factor of at most $\frac{1}{2}$ by Markov's inequality. Applying Markov's inequality $\lfloor\frac{k}{2}\rfloor$ times, we get the lemma.
\end{proof}

\newcommand{\tDelta}{\tilde{\Delta}}

\section{A more general random walk model}
\label{sec:app2}

In Section~\ref{sec:rw-prelim}, we assumed that the value of the buyer evolved according to a simple random walk with a step size of $\delta$, with reflection at 1 and absorption at 0.  The same results go through assuming the value evolves according to Brownian motion with mean 0 and standard deviation $\delta$.  In particular, the necessary lemmas used in the proofs, e.g., all the results in Section~\ref{sec:rw-prelim}, hold essentially unchanged. (See the book by Levin, Peres and Wilmer~\cite{levin2009markov}, Theorem 2.49, Page 57 and Exercise 2.17 (b), page 62.)

In this section, we consider a more general discrete time random walk
model, and outline some of the (standard) arguments used to prove the
claims in Section~\ref{sec:rw-prelim}. Specifically, we assume that
the value $V_t$ evolves according to a discrete time Markov process,
which is a martingale (except at the boundaries). We can also forego
the assumption that the walk is symmetric (implying the third moment
of the steps is 0).  We will assume though that: (a) $V_0 \in (0,1)$.
(b) Reflection occurs whenever $V_t$ crosses 1 (from
below). Specifically, if $V_{t} > 1$, then $V_{t+1} := V_{t-1}$. Note
that in a departure from the basic random walk model, we will allow
the value to temporarily exceed $1$ to simplify our analysis. (c)
There is absorption (i.e., the buyer loses interest forever), whenever
$V_t$ drops to 0 or below. (d) There is a constant $\epsilon>0$ which
is an upper bound on $\Delta_t := |V_{t+1} - V_t|$ for every $t$. In
what follows we assume that $\epsilon$ is sufficiently small.

Finally, we use the notation: 
$$\delta^2 := \Ex{(\Delta_t )^2| V_t}\quad\text{ and }\quad c_3 := \Ex{(\Delta_t )^3| V_t}.$$

\begin{definition}
\label{defn:tau}
Define the stopping time $\tau$ to be the first time at which $V_t \ge 1$ or $V_t \le 0$. 
\end{definition}
Our theorems in Section~\ref{sec:rw} follow from the following facts that are generalizations
of Lemmas~\ref{lem:RWbasic} and \ref{cl:nice}, and Corollary~\ref{cl:SW}.

\begin{enumerate}
\item \label{Nice1'} $\Prx[v]{V_t\text{  hits 1 before 0}}\in v \pm
  O(\epsilon)$. \hfill (Equation~\eqref{reach1} in Lemma~\ref{cl:nice})
\item \label{Nice2'} Let $\tau$ be the stopping time as defined
  above.Then \hfill (special case of Equation~\eqref{Basic2} in Lemma~\ref{lem:RWbasic})
$$\Ex[v]{\tau} \in \frac{v(1-v) \pm
    O(\epsilon)}{\delta^2}.$$ 

\item \label{Nice3'} Let $w= V_{\tau +1}$. Then for any $x< w$, \hfill (special case of Equation~\eqref{Nice3} in Lemma~\ref{cl:nice})
$$h_{w,x} |_{V_{\tau} \ge 1} =\Ex[w]{\text{ time to hit }x\text{ or
    less} | V_{\tau}\ge 1} = \frac{(1-x)^2\pm O(\epsilon)}{\delta^2}.$$
\item \label{Nice4'} The expected cumulative value starting from $v$
is   \hfill (Corollary~\ref{cl:SW})
$$\cv(v)\in \Theta \left(\frac{v\pm \epsilon}{\delta^2}\right).$$
\item \label{Nice5'} The expected time to reach a value of $1$ or
  higher conditioned on reaching that value before $0$ is \hfill (special case of Equation~\eqref{Nice2} in Lemma~\ref{cl:nice})
$$\Ex[v]{\tau | V_{\tau} \ge 1} = \frac{[1-v^2 - c_3(1-v)](1 \pm O(\epsilon))}{3\delta ^2}.$$
\end{enumerate}

\noindent
The proofs of the above facts follow from standard martingale arguments that we outline here.

\noindent
For (\ref{Nice1'}): $\tau$ is a stopping time and $V_t$ is a martingale in $[0, \tau - 1]$, so by the optional stopping
theorem
$$\Ex[v]{V_{\tau}}  = \Ex[v]{V_0} = v =\Prx[v]{V_{\tau} \ge 1} \Ex[v]{V_{\tau} | V_{\tau}\ge 1} + (1- \Prx[v]{V_{\tau}\ge 1}) \Ex[v]{V_{\tau} | V_{\tau}< 1} .$$
which yields this fact since  $\Ex[v]{V_{\tau} | V_{\tau}\ge 1} \in [1, 1+\epsilon)$,
and, $\Ex[v]{V_{\tau} | V_{\tau}< 1} = \Ex[v]{V_{\tau} | V_{\tau}\le 0} \in (-\epsilon, 0]$.

\vspace{0.1in}
\noindent
For (\ref{Nice2'}):
Use the fact that $X_t = \frac{V_t^2}{\delta^2} - t$ is a martingale, and optional stopping at $\tau$, which yields
$$\frac{v^2}{\delta ^2} = \Ex[v]{X_{0}} = \Ex[v]{X_{\tau}}= \frac{\Ex[v]{V_{\tau}^2}}{\delta^2} - \Ex[v]{\tau}.$$
An application of (\ref{Nice1'}) completes the argument.

\vspace{0.1in}
\noindent
For  (\ref{Nice3'}) and  (\ref{Nice4'}): follow the proofs given in section 4 (with the appropriate modification for handling the boundary error).

\vspace{0.1in}
\noindent
For (\ref{Nice5'}) we slightly generalize Exercise 17.1 from~\citet{levin2009markov}.

Let $Y_t = \frac{V_t}{{\delta}}.$ Then, for $t \in [0, \tau -1]$, $Y_t$ is a martingale. In addition, if we define $c_3' = (c_3 / \delta^{3})$, then
$M_t = Y_t^3 - 3tY_t - tc_3'$ is also a martingale.
To verify this fact, let $\tDelta_t = Y_{t+1}- Y_t$. 
\begin{align*}
M_{t+1} - M_t &= \left[(Y_t+ \tDelta_t)^3 -3(t+1)(Y_t + \tDelta_t) - (t+1) c_3' \right] - (Y_t^3 -3tY_t - tc_3')\\
&= 3Y_t^2\tDelta_t + 3Y_t \tDelta_t^2  + \tDelta_t^3 - 3t \tDelta_t - 3Y_t -3\tDelta_t- c_3'.
\end{align*}
Hence
\begin{align*}
\Ex{M_{t+1} - M_t| V_t}
&= 3Y_t^2\Ex{\tDelta_t |V_t}+ 3\frac{V_t}{{\delta}}\Ex{ \tDelta_t^2| V_t}  + \Ex{\tDelta_t^3|V_t} - 3t \Ex{\tDelta_t|V_t} - 3\frac{ V_t}{{\delta}}- c_3'\\
\intertext{and since $\tDelta_t =\frac{\Delta_t}{{\delta}}$, this equals}
 &= 3\frac{V_t}{{\delta}} + \frac{c_3}{\delta^{3}} -3\frac{ V_t}{{\delta}}- c_3'=0.
\end{align*}
Finally, we apply the Optional Stopping Theorem and get
\begin{align*}
\Ex[v]{M_{0}} = \frac{v^3}{\delta^{3}} &= \Ex[v]{M_{\tau}} = 
\Ex[v]{\frac{V_{\tau}^3}{\delta^{3}}- 3 \tau \frac{V_{\tau}}{{\delta}}
- c_3' \tau}
\intertext{or equivalently}
v^3 &= \Ex[v]{V_{\tau}^3 - 3 \delta^2\tau V_{\tau}
- c_3 \tau}.\quad\quad (*)\\
\intertext{Rewrite}
 \Ex[v]{V_{\tau}^3} &=  \Ex[v]{V_{\tau}^3| V_{\tau} \ge 1} 
\Prx[v]{V_{\tau} \ge 1} + \Ex[v]{V_{\tau}^3| V_{\tau} \le 0 } \Prx[v]{V_{\tau} \le 0} \\
&= \Prx[v]{V_{\tau} \ge 1}(1+ O(\epsilon))\\
\intertext{and}
 \Ex[v]{\tau V_{\tau}} &=  \Ex[v]{\tau V_{\tau}| V_{\tau} \ge 1} 
\Prx[v]{V_{\tau} \ge 1} + \Ex[v]{\tau V_{\tau}| V_{\tau} \le 0 } \Prx[v]{V_{\tau} \le 0}. \\
&= \Ex[v]{\tau | V_{\tau} \ge 1}  \Prx[v]{V_{\tau} \ge 1} (1 + O(\epsilon))\\
\intertext{Plugging back into (*), dividing both sides by
$\Prx[v]{V_{\tau}\ge 1} = v(1 \pm O(\epsilon))$ and using fact (\ref{Nice2'}), we obtain} 
v^2 &=\left[1 - 3 \delta^2 \Ex[v]{\tau | V_{\tau} \ge 1} - c_3 (1-v) \right](1\pm O(\epsilon))
\end{align*}
Rearranging gives the result.

\vspace{0.2in}
With the above facts in hand, the other results in the paper for the random walk model follow in this more general setting mutatis mutandis.

\end{document}